\newtheorem{theorem}{Theorem}[section]
\newtheorem{corollary}[theorem]{Corollary}
\newtheorem{definition}[theorem]{Definition}
\begin{document}

\title{Differential Private Noise Adding Mechanism \\and Its Application on Consensus$^*$}
\author{Jianping He and Lin Cai
\thanks{$^*$ Part of the preliminary result of this work was accepted by IEEE Conference
on American Control Conference (ACC), 2017 \cite{he17acc}.}
\thanks{The Dept. of Electrical \& Computer Engineering at the University of Victoria, BC, Canada {\tt\small{jphe@uvic.ca}}; {\tt\small cai@ece.uvic.ca}}
 }

\maketitle
\begin{abstract}
Differential privacy is a formal mathematical {stand-ard} for quantifying the degree of that individual privacy in a statistical database is preserved. To guarantee differential privacy, a typical method is adding random noise to the original data for data release. In this paper, we investigate the conditions of differential privacy considering the general random noise adding mechanism, and then apply the obtained results for privacy analysis of the privacy-preserving consensus algorithm. Specifically, we obtain a necessary and sufficient condition of $\epsilon$-differential privacy, and the sufficient conditions of $(\epsilon, \delta)$-differential privacy.  We apply them to analyze various random noises. For the special cases with known results, our theory matches with the literature; for other cases that are unknown, our approach provides a simple and effective tool for differential privacy analysis.  Applying the obtained  theory, on privacy-preserving consensus algorithms, it is proved that the average consensus and $\epsilon$-differential privacy cannot be guaranteed simultaneously by any privacy-preserving consensus algorithm.
\end{abstract}

\begin{IEEEkeywords}
Random mechanism, Noise adding process, Average consensus, Differential privacy.
\end{IEEEkeywords}

\IEEEpeerreviewmaketitle

\section{{Introduction}} \label{Intro}

Differential privacy, a famous and widely used privacy concept, aims to minimize the chances of identifying a single record in a release of a large database \cite{Dwork06}. Differential privacy means that the presence or absence of any individual record in the database will not affect the statistics significantly \cite{TIT16}. Thus, the adversary has a low chance to identify the individual's record with the released information and any auxiliary information under differential privacy. Differential privacy has been a formal framework to quantify the degree of that the individual privacy  is preserved while releasing useful statistical information about the database \cite{TIT16}. We refer the readers to  \cite{Dwork08, Dwork2014} by Dwork et al. for the detailed introduction of differential privacy, including the motivation, background, the important developments of its theories and applications. More recently, Cortes et al.  \cite{Cortes16}  introduced a system and control
perspective on the topic of privacy-preserving data analysis, showing the importance of differential privacy in network and control area.


There are two kinds of differential privacy concepts which have been widely investigated in the literature. The first is
 $\epsilon$-differential privacy. The parameter $\epsilon$ expresses the privacy cost, and a smaller value of $\epsilon$ can guarantee a stronger privacy. Based on $\epsilon$-differential privacy,  an adversary cannot gain significant information about the data function of any individual agent based on the observation of the data output.  The typical approach to preserving $\epsilon$-differential privacy is adding
random noise to original data for information release. Recently, Geng and Viswanath \cite{TIT16} showed that when the adding random noise follows a staircase distribution, the optimal $\epsilon$-differential privacy can be guaranteed in terms of a general utility-based framework formulated by the authors. The second is $(\epsilon, \delta)$-differential privacy, which is a relaxed notion of privacy. In this privacy definition, the  parameter $\epsilon$ represents the privacy cost  and $\delta$ represents the probability of violating the privacy cost. For both parameters, smaller values correspond to higher privacy \cite{Barthe16}. To ensure $(\epsilon, \delta)$-differential privacy, a often-used approach is adding Gaussian noise to the pure data value for query output.

Although random noise adding mechanism has been widely-used, how to design and analyze the effectiveness of various types of noises remains a challenge. Existing work mostly focused on a few well-known noise distributions (e.g., Laplacian and  Gaussian). To fill this gap, in this paper, we first investigate the basic conditions for the random noise adding mechanism, under which differential privacy can be guaranteed. We then obtain the conditions to determine whether the differential privacy is guaranteed by the noise adding mechanism. To show this statement, we analyze the well-known noise adding mechanisms, e.g., Laplacian and  Gaussian. For the special cases with known results, our theory matches with the literature; for other cases that are unknown, our approach provides a simple and effective tool for differential privacy analysis.  In addition, we apply the theory to analyze the privacy of the privacy-preserving consensus algorithms,  a hot topic in the control and optimization area recently.  The main contributions of this paper are summarized as follows\@.
\begin{enumerate}
\item We investigate the conditions of a general random noise adding mechanism to guarantee differential privacy. We obtain a necessary and sufficient condition of $\epsilon$-differential privacy, and the sufficient conditions of $(\epsilon, \delta)$-differential privacy. Meanwhile, we provide the estimation of the upper bounds of $\epsilon$ and $\sigma$, respectively.
\item  We show that the obtained theory provides an efficient and simple approach for the analysis of  both $\epsilon$-differential privacy and $(\epsilon, \delta)$-differential privacy. Specifically, using the obtained results, we prove that Laplacian noise is $\epsilon$-differential private, and Gaussian noise is $(\epsilon, \delta)$-differential private, and uniform noise is $(0, \delta)$-differential private,  consistent with the claims in the existing works.
\item   We apply the theorems of differential privacy to analyze the privacy of general privacy-preserving consensus algorithm. We obtain the necessary  condition and the sufficient condition for the algorithm under which differential privacy is achieved. We prove that it is impossible to achieve the average consensus and $\epsilon$-differential privacy simultaneously.
\end{enumerate}
Different from the existing work, the proposed results can be used to the random  noise adding mechanism with any noise distributions.

The remainder of this paper is organized as follows\@.  Section \ref{sec:pre} formulates the problem.  In Section \ref{sec:mainresult},  we provide the basic theoretical results of differential privacy\@.   Section \ref{sec:ap} studies the application on privacy-preserving consensus algorithm. The related works are given in Section \ref{sec:rw}. Conclusions are summarized in Section \ref{sec:conclusions}\@.

\section{Preliminary and Problem Formulation}\label{sec:pre}

\subsection{Preliminary}
Let $V=\{1, 2,...,n\}$ be the set of nodes (users). Following \cite{huang12, huang15, Nozari16, Han2017},  adjacency and differential privacy, respectively, are defined as follows.
\begin{definition} [$\sigma$-adjacency]
Given $\sigma \in \mathbf{R}^+$, the state vector $x$ and $y$ are $\sigma$-adjacent if, for some $i_0\in V$,
\begin{equation} \label{adjacent}
|x_i-y_i|\leq \left\{ \begin{aligned}
        & \sigma,  &&  \textrm{if} ~ i=i_0; \\
       &0, &&   \textrm{if} ~ i\neq i_0,
                          \end{aligned} \right.
                          \end{equation}
for $i \in V$, where $x, y \in \mathbf{R}^n$.
 \end{definition}

From the above definition, it follows that a pair of $\sigma$-adjacent vectors $x$ and $y$ have at most one different element, and  the difference is no more than $\sigma$. For example, $x=[0, 1]$ and $y=[1, 1]$ are $1$-adjacent vectors.

\begin{definition} [$(\epsilon, \delta)$-differential privacy]
A randomized mechanism $\mathcal {A}$ with domain $\Omega$ is $(\epsilon, \delta)$-differentially private if, for any pair $x$ and $y$ ($x, y\in \Omega$) of $\sigma$-adjacent state vector and any set $\mathcal {O}\subseteq \textrm{Range}(\mathcal {A})$, where $\textrm{Range}(\mathcal {A})$ is the domain of the output under mechanism $\mathcal {A}$,
 \begin{equation}\label{diffprivacy}
\Pr\{\mathcal {A}(x)\in \mathcal {O}\}\leq e^{\epsilon}\Pr\{\mathcal {A}(y)\in \mathcal {O}\}+\delta.
\end{equation}
 \end{definition}

In the above privacy definition, there are two key parameters, $\epsilon$ and $\delta$,  which represent the privacy cost and the probability of violating the privacy cost, respectively. For both of these parameters, smaller values offer stronger privacy guarantees. If $\delta=0$, we say that $\mathcal {A}$ is $\epsilon$-differentially private, which provides a stronger privacy than $(\epsilon, \delta)$-differential privacy.

 Table~\ref{table:definitions} summarizes a few important notations in this paper for easy reference.

\subsection{Problem Formulation}
\textbf{General Random Mechanism}: We consider a general random noise adding mechanism. Assume that the randomized mechanism $\mathcal {A}: \Omega \rightarrow\textrm{Range}(\mathcal {A})$ satisfies
\begin{equation}\label{randoma0}
\mathcal {A}(x)=x+\theta, ~\forall ~x\in \Omega,
\end{equation}
where $\theta\in \Theta$ is a random noise vector with $f_{\theta_i}(z)$ as the PDF of its $i$-th element $\theta_i$, and $\Theta\subseteq R^n$.  We assume that each $f_{\theta_i}(z)$ is a continuous or sectionally continuous function, and $\theta_i$ and $\theta_j$ are independent from each other for $\forall i\neq j$. Then, we have $\textrm{Range}(\mathcal {A})=\Omega \oplus \Theta$, where $\oplus$ denotes the Minkowski sum between two set, i.e., any element in $\textrm{Range}(\mathcal {A})$ will equal to the sum of two elements respectively in sets $\Omega$ and $\Theta$. The random mechanism $\mathcal {A}$ defined in (\ref{randoma0}) is a general noise adding mechanism, where $x$ could be substituted by a general invertible function of $x$ with  Lipschitz condition and $\theta$ could also be a function of  random variables\footnote{ For example, we consider a more general mechanism as follows
\[\mathcal {A}(x)=g(x)+h(\theta), ~\forall ~x\in \Omega, \theta\in \Theta,\]
where $g(x)$ is a function of $x$ satisfying $|g(x)-g(y)|\leq L |x-y|$ (where $L$ is a Lipschitz constant) and $g(x)\neq g(y)$ when $x\neq y$ and $h(\theta)$ is a function of $\theta$. We can use the similar analytical approach given in this paper to analyze differential privacy of the above mechanism.}.

\textbf{Problem of Interests}:  The goal of this paper is to investigate the following two questions: i) what are the sufficient and necessary conditions of differential privacy considering the randomized mechanism $\mathcal {A}$ defined in (\ref{randoma0}); and ii) how to extend and apply the obtained results for privacy analysis on the privacy-preserving consensus algorithm, an important distributed iteration algorithm in the control area. We will solve these two problems in the following two sections, respectively.

  \begin{table}[t] \tabcolsep 1pt \caption{Important Notations} \vspace*{4pt} \centering \tabcolsep 0.5mm
    \begin{tabular}{c||l}
    \hline Symbol  &                   Definition\\ \hline
    $x, y$                 & a pair of $\sigma$-adjacent $n$ dimensional  vector\\
  $\mathcal {A}$ & a random mechanism\\
  $\sigma$ & a parameter expressing the adjacency between vectors\\
  $\epsilon$ & a parameter expressing the privacy cost\\
   $\delta$ & a parameter expressing the probability of the violating the privacy\\
  $\mathcal {O}$ & a subset of $\textrm{Range}(\mathcal {A})$ \\
    $\mathcal {O}_i$ & the set of $i$-th column element of  $\mathcal {O}$\\
  $\textrm{Range}(\mathcal {A})$ & the domain of the output under mechanism $\mathcal {A}$\\
    $ f_{\theta_i}(z)$ & the probability density function of random variable $\theta_i$ \\
    $\mu$                 & the function of Lebesgue measure\\
    $\Phi_i^0$   & the zero point set of the function $f_{\theta_i}(z)$ \\
    $W$                 & the weight matrix in a consensus alogrithm\\
        $\bar{x}$                 & the average value of all node states\\
        \hline
    \end{tabular}
\vspace*{-4pt}
\label{table:definitions}
\end{table}

\section{Conditions of Differential Privacy} \label{sec:mainresult}
In this section, first, the basic conditions of  differential privacy considering $\mathcal {A}$ defined in (\ref{randoma0}) are obtained, followed by the estimations of the privacy parameters. Then, we show that the obtained conditions provide efficient criteria of differential privacy through case studies, where the Laplacian, Gaussian, and Uniform noises are investigated, respectively, by using the developed theoretical results.

\subsection{Necessary and Sufficient Condition}
In this subsection, a necessary and sufficient condition of $\epsilon$-differentially private is given in the following theorem.
\begin{theorem}\label{theorem0:nsc}
 $\mathcal {A}$ is $\epsilon$-differentially private if and only if (iff) the following two conditions hold,
\begin{enumerate}
\item[c1:] zero measure of the zero-point set, i.e.,
\begin{equation}\label{cedu0}
\mu\left(\cup_{i=1}^n \Phi_i^0\right)=0
\end{equation}
where $\Phi_i^0= \{z| f_{\theta_i}(z)=0, z\in \mathbf{R}\}$ is the zero point set and $\mu(\cdot)$ is the  Lebesgue measure;
\item[c2:] there exists a positive constant $c_b$ such that
\begin{equation}\label{factpingyi0}
\sup_{\hat{\sigma}\in[-\sigma, \sigma], f_{\theta_i}(z)\neq 0} {f_{\theta_i+\hat{\sigma}}(z)\over f_{\theta_i}(z)} \leq c_b,
\end{equation}
for $\forall i\in V$, where the acnodes space (isolated point) of both $f_{\theta_i+\hat{\sigma}}$ and $f_{\theta_i}(z)$ are not considered.
\end{enumerate}
\begin{proof}
$\Leftarrow$: We prove the necessity by contradiction.

First, we prove that  (\ref{cedu0}) is a necessary condition.  Assume that  $\mu\left(\cup_{i=1}^n \Phi_i^0\right)>0$, which means that there exists at least one $i_0\in V$ with $\mu\left(\Phi_{i_0}^0\right)>0$. Thus, there should exist an continuous interval $[a, b]$ such that
\begin{equation}\label{eq:fz0}
f_{\theta_{i_0}}(z)=0, ~\forall~ z\in [a, b],
\end{equation}
and
\begin{equation}\label{eq:fzl0}
 f_{\theta_{i_0}}(z)>0,~\textrm{for}~ z>b,
 \end{equation}
or $ f_{\theta_{i_0}}(z)>0$ for $z<a$ {\footnote{Without loss of generality, suppose $f_{\theta_i}(z)>0$ holds for $z>b$ in the following proof.}}.

For $\sigma$-adjacent state vectors $x$ and $y$, we let $x_{i_0}=y_{i_0}-\sigma$ and $x_{i}=y_{i}$ (when $i\neq i_0$),  where $\sigma<b-a$.  Then, with  (\ref{randoma0}), we have $\mathcal {A}(x_{i_0})=x_{i_0}+\theta$ and $\mathcal {A}(y_{i_0})=y_{i_0}+\theta$. Define a subset $\mathcal {O}_{i_0}=[y_{i_0}+a, y_{i_0}+b]$, which satisfies $\mathcal {O}_{i_0}\subseteq \textrm{Range}(\mathcal {A}(x_{i_0}))$. From (\ref{eq:fz0}) and  (\ref{eq:fzl0}), it follows that
\begin{align*}
& \Pr\{\mathcal {A}(x_{i_0})\in \mathcal {O}_{i_0}\}=\int_{y_{i_0}+a}^{y_{i_0}+b} f_{x_{i_0}+\theta_{i_0}}(z) \text{d} z
\\&=\int_{y_{i_0}+a-x_{i_0}}^{y_{i_0}+b-x_{i_0}} f_{\theta_{i_0}}(z) \text{d} z=\int_{a+\sigma}^{b+\sigma} f_{\theta_{i_0}}(z) \text{d} z\\&=\int_{a+\sigma}^{b} f_{\theta_{i_0}}(z) \text{d} z+\int_{b}^{b+\sigma} f_{\theta_{i_0}}(z) \text{d} z \\&=\int_{b}^{b+\sigma} f_{\theta_{i_0}}(z) \text{d} z>0,
 \end{align*}
while
 \begin{align*}
\Pr\{\mathcal {A}(y_{i_0})\in \mathcal {O}_{i_0}\}&=\int_{y_{i_0}+a}^{y_{i_0}+b} f_{y_{i_0}+\theta_{i_0}}(z) \text{d} z
\\&=\int_{a}^{b} f_{\theta_{i_0}}(z) \text{d} z=0.
 \end{align*}
Hence, it follows that
  \begin{align}\label{x0xy0y}
&{\Pr\{\mathcal {A}(x)\in \mathcal {O}\}\over \Pr\{\mathcal {A}(y)\in \mathcal {O}\}}\nonumber\\=&{ \Pr\{\mathcal {A}(x_{i_0})\in \mathcal {O}_{i_0}\}  \prod_{i=1, i\neq i_0}^n \Pr\{\mathcal {A}(x_{i}) \in \mathcal {O}_i\}\over \Pr\{\mathcal {A}(y_{i_0})\in \mathcal {O}_{i_0}\}  \prod_{i=1, i\neq i_0}^n \Pr\{\mathcal {A}(y_{i}) \in \mathcal {O}_i\}}\nonumber\\=&{ \Pr\{\mathcal {A}(x_{i_0})\in \mathcal {O}_{i_0}\} \over \Pr\{\mathcal {A}(y_{i_0})\in \mathcal {O}_{i_0}\} }= \infty,
 \end{align}
 where $\mathcal {O}_i$ is the domain of the $i$-th element in  $\mathcal {O}$.
It contradicts to the definition of $\epsilon$-differential privacy. Thus, one obtains that (\ref{cedu0}) is a necessary condition if $\mathcal {A}$ is $\epsilon$-differentially private.

 Second, we prove that  (\ref{factpingyi0}) is also a necessary condition. Suppose that
 \[\sup_{\hat{\sigma}\in[-\sigma, \sigma], f_{\theta_i}(z)\neq 0} {f_{\theta_i+\hat{\sigma}}(z)\over f_{\theta_i}(z)}=\infty.\]
 Then, for any given large constant $M$, there exist $z_0$ and $\hat{\sigma}\in [-\sigma, \sigma]$ such that $ f_{\theta_i}(z_0)\neq 0$  and \[{f_{\theta_i}(z_0+\hat{\sigma})\over f_{\theta_i}(z_0)}\geq M.\]
 Since (\ref{cedu0}) holds and the measure of the acnodes is also zero, we can ignore these two cases when we calculate the probability. We thus can assume that $f_{\theta_{i_0}}(z)$ is a continuous function in a small interval among $z_0$ and among $z_0+\hat{\sigma}$. Then, based on the definition of a continuous function, there exists a small positive constant $\varepsilon_0$ such that
 \[\max_{z\in[z_0,  z_0+\varepsilon_0]} f_{\theta_{i_0}}(z) \leq 2 f_{\theta_i}(z_0)\]
and
 \[\min_{z\in[z_0+\hat{\sigma},  z_0+\hat{\sigma}+\varepsilon_0]} f_{\theta_{i_0}}(z)\geq (M-1) f_{\theta_i}(z_0).\]

Then, we construct a pair of $\hat{\sigma}$-adjacent state vector $x$ and $y$ with $x_{i_0}=y_{i_0}-\hat{\sigma}$ and $x_{i}=y_{i}$ (when $i\neq i_0$). Define the set $\mathcal{O}_{i_0}^0=[y_{i_0}+z_0, y_{i_0}+z_0+\varepsilon_0]$, where $\varepsilon_0\leq \hat{\sigma}$.  Based on (\ref{randoma0}), we have
\begin{align*}
{\Pr\{\mathcal {A}(x_{i_0})\in \mathcal {O}_{i_0}^0\} \over \Pr\{\mathcal {A}(y_{i_0})\in \mathcal {O}_{i_0}^0\}}&={\int_{y_{i_0}+z_0}^{y_{i_0}+z_0+\varepsilon_0} f_{x_{i_0}+\theta_{i_0}}(z) \text{d} z \over \int_{y_{i_0}+z_0}^{y_{i_0}+z_0+\varepsilon_0} f_{y_{i_0}+\theta_{i_0}}(z) \text{d} z}\\
&={\int_{z_0+\hat{\sigma}}^{z_0+\hat{\sigma}+\varepsilon_0} f_{\theta_{i_0}}(z) \text{d} z \over \int_{z_0}^{z_0+\varepsilon_0} f_{\theta_{i_0}}(z) \text{d} z}\\&\geq{ (M-1) f_{\theta_i}(z_0)\varepsilon_0 \over 2 f_{\theta_i}(z_0)\varepsilon_0}\\&\geq{ (M-1)  \over 2}.
 \end{align*}
 Then, similar to (\ref{x0xy0y}), we infer that
 \begin{align*}
{\Pr\{\mathcal {A}(x)\in \mathcal {O}\}\over \Pr\{\mathcal {A}(y)\in \mathcal {O}\}}\geq{ (M-1) \over 2}.
\end{align*}
Note that $M$ could be an arbitrarily large constant, which implies that $\mathcal {A}$ is not $\epsilon$-differentially private. Hence, (\ref{cedu1}) also a necessary condition to ensure that  $\mathcal {A}$ is $\epsilon$-differentially private.

$\Rightarrow$:  Next, we prove the sufficiency.
Since (\ref{cedu0}) holds, the cases that $f_{\theta_i}(z)=0, \forall i\in V$ and the acnodes are ignored when we calculate a probability. Let $\mathcal {O}_i$ be the domain/set of $i$-th element in  $\mathcal {O}$ for $l=1, ..., n$.  Under (\ref{randoma0}), we have
\begin{align}\label{prx}
&\Pr\{\mathcal {A}(x)\in \mathcal {O}\}=\Pr\{x+\theta \in \mathcal {O}\}\nonumber
\\&=\Pr\{x_{i_0}+\theta_{i_0}\in \mathcal {O}_{i_0}\} \prod_{i=1, i\neq i_0}^n \Pr\{x_i+\theta_i \in \mathcal {O}_i\}
\end{align}
and
\begin{align}\label{pry}
&\Pr\{\mathcal {A}(y)\in \mathcal {O}\}=\Pr\{y+\theta \in \mathcal {O}\} \nonumber
\\&=\Pr\{y_{i_0}+\theta_{i_0}\in \mathcal {O}_{i_0}\} \prod_{i=1, i\neq i_0}^n \Pr\{y_i+\theta_i \in \mathcal {O}_i\}.
\end{align}
Since  $x_{i}=y_{i}, i\neq i_0$, we have
\begin{align}\label{prxy}
\prod_{i=1, i\neq i_0}^n \Pr\{x_i+\theta_i \in \mathcal {O}_i\}= \prod_{i=1, i\neq i_0}^n \Pr\{y_i+\theta_i \in \mathcal {O}_i\}.
\end{align}
Meanwhile, with the condition of $c_2$,  it follows
\begin{align}\label{nef}
&\Pr\{x_{i_0}+\theta_{i_0}\in \mathcal {O}_{i_0}\} =\oint_{\mathcal {O}_{i_0}} f_{x_{i_0}+\theta_{i_0}}(z) \text{d} z
\nonumber\\&=\oint_{\mathcal {O}_{i_0}} f_{y_{i_0}-\hat{\sigma}+\theta_{i_0}}(z) \text{d} z\leq \oint_{\mathcal {O}_{i_0}} c_b f_{y_{i_0}+\theta_{i_0}}(z) \text{d} z
\nonumber\\&=c_b  \Pr\{y_{i_0}+\theta_{i_0}\in \mathcal {O}_{i_0}\},
\end{align}
where we have used the fact of (\ref{factpingyi0}). Substituting  (\ref{prxy}) and (\ref{nef})  into (\ref{prx}) yields
\begin{align*}
\Pr\{\mathcal {A}(x)\in \mathcal {O}\}&\leq c_b \Pr\{\mathcal {A}(y)\in \mathcal {O}\}
\\&=e^{\log(c_b)}\Pr\{\mathcal {A}(y)\in \mathcal {O}\}
\end{align*}
which completes the proof.
\end{proof}
\end{theorem}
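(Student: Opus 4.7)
The plan is to prove the two directions separately, handling sufficiency first because it is the cleaner side. For sufficiency, I would exploit the independence of the components $\theta_i$ to factorize $\Pr\{\mathcal{A}(x)\in\mathcal{O}\}$ as a product of one-dimensional marginals over the projected sets $\mathcal{O}_i$. Since $x$ and $y$ are $\sigma$-adjacent, every factor for $i\neq i_0$ is identical on the two sides, so the ratio $\Pr\{\mathcal{A}(x)\in\mathcal{O}\}/\Pr\{\mathcal{A}(y)\in\mathcal{O}\}$ collapses to the one-dimensional ratio in coordinate $i_0$. Condition c1 justifies restricting the integrals to $\{z:f_{\theta_{i_0}}(z)\neq 0\}$, since zero-measure sets contribute nothing. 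Condition c2 then supplies a pointwise bound $f_{\theta_{i_0}+\hat{\sigma}}(z)/f_{\theta_{i_0}}(z)\leq c_b$ that survives integration, giving $\Pr\{\mathcal{A}(x)\in\mathcal{O}\}\leq c_b\Pr\{\mathcal{A}(y)\in\mathcal{O}\}$ and hence $\epsilon$-DP with $\epsilon=\log c_b$.

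For necessity I would argue by contradiction in two stages, one for each condition. If c1 fails, there exist an index $i_0$ and a nondegenerate interval $[a,b]$ on which $f_{\theta_{i_0}}$ vanishes, while the density is positive just outside the interval on at least one side. Choosing $y_{i_0}=x_{i_0}+\sigma$ with $\sigma<b-a$, and taking $\mathcal{O}_{i_0}=[y_{i_0}+a,y_{i_0}+b]$, a change of variables sends $\Pr\{\mathcal{A}(y_{i_0})\in\mathcal{O}_{i_0}\}$ to the integral of $f_{\theta_{i_0}}$ over $[a,b]$, which is zero, while $\Pr\{\mathcal{A}(x_{i_0})\in\mathcal{O}_{i_0}\}$ becomes a strictly positive tail integral. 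The factorization used in sufficiency then forces the full probability ratio to be infinite, contradicting any finite $e^{\epsilon}$. If instead c2 fails, then for any constant $M$ I can locate a point $z_0$ with $f_{\theta_{i_0}}(z_0)\neq 0$ and a shift $\hat{\sigma}\in[-\sigma,\sigma]$ such that $f_{\theta_{i_0}}(z_0+\hat{\sigma})/f_{\theta_{i_0}}(z_0)\geq M$. Pick a short window $[z_0,z_0+\varepsilon_0]$ on which the density stays within a factor of $2$ of $f_{\theta_{i_0}}(z_0)$ and the density on the translated window stays above $(M-1)f_{\theta_{i_0}}(z_0)$, then build $\hat{\sigma}$-adjacent vectors whose coordinate-$i_0$ output ratio exceeds $(M-1)/2$. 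Since $M$ is arbitrary, no fixed $\epsilon$ suffices.

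The main technical obstacle I anticipate is the regularity step inside the c2-necessity argument. Because $f_{\theta_i}$ is only sectionally continuous, I must be careful to select $z_0$ and $\hat{\sigma}$ at points where both $f_{\theta_{i_0}}$ near $z_0$ and $f_{\theta_{i_0}}$ near $z_0+\hat{\sigma}$ are genuinely continuous, so that the $\varepsilon_0$-window controls needed for the two-sided bound are valid. The rescue is that c1 already rules out positive-measure zero sets, and the set of isolated acnodes (discontinuities) is at most countable and therefore has Lebesgue measure zero, so the bad points form a null set that can be avoided when choosing $z_0$. Once that subtlety is handled, the mean-value-style lower bound on the translated integral and the corresponding upper bound on the reference integral assemble into the required ratio blow-up, and together with the c1-necessity argument, they close the proof.
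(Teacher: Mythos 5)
Your proposal follows essentially the same route as the paper's own proof: the same product factorization over independent coordinates for sufficiency, the same interval-plus-shift construction with $\mathcal{O}_{i_0}=[y_{i_0}+a,y_{i_0}+b]$ to blow up the ratio when c1 fails, and the same $\varepsilon_0$-window argument yielding the $(M-1)/2$ lower bound when c2 fails. Your explicit handling of the sectional-continuity issue when selecting $z_0$ is in fact slightly more careful than the paper's treatment, but it is the same argument.
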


From the proof of the above theorem, it is easy to obtain the following corollary, which provides an upper bound estimation of the parameter $\epsilon$ in  $\epsilon$-differential privacy.
\begin{corollary}
If conditions $c_1$ and $c_2$ hold, $\mathcal {A}$ is $\epsilon$-differential private, where $\epsilon$ satisfies
$0 \leq \epsilon\leq \log(c_b)$.
\end{corollary}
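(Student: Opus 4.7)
The plan is to read the bound directly off the final display in the sufficiency half of Theorem~\ref{theorem0:nsc}. That argument established, under $c_1$ and $c_2$, the chain
\[
\Pr\{\mathcal{A}(x)\in\mathcal{O}\}\;\le\;c_b\,\Pr\{\mathcal{A}(y)\in\mathcal{O}\}\;=\;e^{\log(c_b)}\Pr\{\mathcal{A}(y)\in\mathcal{O}\},
\]
for every pair of $\sigma$-adjacent $x,y$ and every measurable $\mathcal{O}\subseteq\mathrm{Range}(\mathcal{A})$. Comparing this with the defining inequality of $\epsilon$-differential privacy (with $\delta=0$) shows immediately that the mechanism satisfies the $\epsilon$-DP inequality for $\epsilon=\log(c_b)$, and hence for every $\epsilon\ge\log(c_b)$ as well. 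In writing the proof I would simply cite the concluding inequality of Theorem~\ref{theorem0:nsc} rather than reproduce the integration/marginalization steps.

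To pin down the upper endpoint of the interval, I would argue that $\log(c_b)$ is the tightest privacy cost this route can deliver: the supremum in $c_2$ captures exactly the worst-case pointwise likelihood ratio on $\sigma$-adjacent inputs, so no smaller multiplicative constant than $c_b$ can be substituted into the step~(\ref{nef}) of the previous proof. Therefore any $\epsilon$ with $\epsilon\le\log(c_b)$ is guaranteed to be a valid privacy parameter by this estimate.

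For the lower endpoint $\epsilon\ge 0$, I would invoke the symmetry of $\sigma$-adjacency. Swapping the roles of $x$ and $y$ in the same derivation gives $\Pr\{\mathcal{A}(y)\in\mathcal{O}\}\le e^{\epsilon}\Pr\{\mathcal{A}(x)\in\mathcal{O}\}$; multiplying the two inequalities yields $e^{2\epsilon}\ge 1$ whenever the probabilities are nonzero, hence $\epsilon\ge 0$. Equivalently, one can note that taking $\hat{\sigma}=0$ in $c_2$ forces $c_b\ge 1$, so $\log(c_b)\ge 0$ and the stated interval $0\le\epsilon\le\log(c_b)$ is nonempty.

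There is no real obstacle here: the corollary is essentially a bookkeeping restatement of the constant that already appears at the very end of the sufficiency proof, together with the elementary observation that $c_b\ge 1$. The only care needed is to state clearly that $\log(c_b)$ is an upper bound on the achievable privacy parameter (smaller $\epsilon$ being stronger privacy), so the conclusion is that the mechanism is $\epsilon$-differentially private for every $\epsilon$ in $[0,\log(c_b)]$, with $\log(c_b)$ being the best bound this particular argument provides.
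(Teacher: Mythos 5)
Your proposal matches the paper exactly: the paper offers no separate proof of this corollary, remarking only that it follows from the concluding display of the sufficiency half of Theorem~\ref{theorem0:nsc}, namely $\Pr\{\mathcal{A}(x)\in\mathcal{O}\}\leq c_b\,\Pr\{\mathcal{A}(y)\in\mathcal{O}\}=e^{\log(c_b)}\Pr\{\mathcal{A}(y)\in\mathcal{O}\}$, and your observation that taking $\hat{\sigma}=0$ in $c_2$ forces $c_b\geq 1$ (hence $\log(c_b)\geq 0$) is the right way to justify the lower endpoint. One sentence of yours is wrong, however: after correctly noting that $\epsilon$-DP at level $\log(c_b)$ implies $\epsilon$-DP for every $\epsilon\geq\log(c_b)$, you conclude that the mechanism is $\epsilon$-differentially private ``for every $\epsilon$ in $[0,\log(c_b)]$.'' That inverts the monotonicity of the definition: smaller $\epsilon$ is a \emph{stronger} guarantee, so nothing here certifies, say, $0$-DP. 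The corollary's interval should be read as asserting the existence of a valid privacy parameter not exceeding $\log(c_b)$ (i.e., $\log(c_b)$ is an upper bound on the achievable privacy cost), which is exactly what your middle paragraph says; the final sentence should be corrected to agree with it.
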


Note that a smaller $\epsilon$ provides a stronger privacy guarantee. From the above corollary, we note that $\epsilon\rightarrow 0$ if $c_b\rightarrow 1$, i.e., a stronger privacy can be guaranteed when $c_b$ becomes smaller. Then, we consider a kind of random distribution which can guarantee any small $\epsilon$-differential private.  For any $c_b>1$, we construct a staircase-shaped PDF for each random variable used to the noise adding mechanism, such that the conditions $c_1$ and $c_2$ can be satisfied, which is given by
\begin{equation*}
f(z)=\left\{ \begin{aligned}
       & {1-\varrho \over 2a} \varrho^k,  && z\in [ka, (k+1)a];\\
        & {1-\varrho \over 2a}, &&   z\in [-a, a]; \\
       & {1-\varrho \over 2a} \varrho^k,  &&   z\in [-ka-a, -ka] ,
                          \end{aligned} \right.
                          \end{equation*}
where $\varrho\in (0, 1)$ and $k$ is a positive integer and $a$ is a positive constant. A staircase-shaped PDF is shown in Fig. \ref{sspdf}.
\begin{figure}[ht]
\begin{center}
\includegraphics[width=0.4\textwidth]{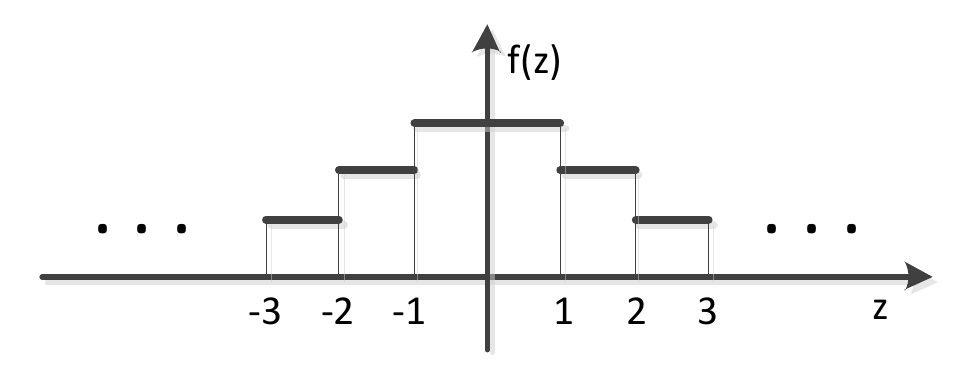}
\vspace{-5pt}
\caption{An example of the staircase-shaped PDF.} \label{sspdf}
\end{center}
\end{figure}
For the above staircase-shaped function $f(z)$, we obtain that
\[\int_{-\infty}^{+\infty}f(z) \text{d} z=(1-\varrho)+2\sum_{k=1}^{\infty}{1-\varrho \over 2} \varrho^k= 1,\]
and it thus could be a PDF function for a random variable. In this case, when $\sigma\leq 1$, it follows that
\[\sup_{\hat{\sigma}\in[-\sigma, \sigma], f_{\theta_i}(z)\neq 0} {f_{\theta_i+\hat{\sigma}}(z)\over f_{\theta_i}(z)}={1\over \varrho}.\]
Note that $\varrho$ could be any value in $(0, 1)$, which means that $c_b$ could be any value in $(1, \infty)$ by setting $\varrho={1\over c_b}$. Hence, for any small $\epsilon>0$, we can find a staircase-shaped PDF for the adding noise such that $\mathcal {A}$ is $\epsilon$-differentially private. It also should be pointed out that the staircase-shaped PDFs are proved to be the optimal distribution for noise adding mechanism in terms of utility-maximization \cite{TIT16}, where the utility is a specific cost function depending on the noise added to the query output (i.e., the different value of the adding noise will cause different cost for the query output function).

Next, we provide another necessary condition and sufficient condition of $\epsilon$-differentially private, respectively.
\begin{theorem}\label{theorem0:nsc1}
If  $\mathcal{A}$ is $\epsilon$-differentially private, then for $\forall i\in V$,   there $\nexists~ c_o\in (-\infty, +\infty)$, such that
\begin{equation}\label{cedu1}
\lim_{z\rightarrow c_0} f_{\theta_{i}}(z) =0.
\end{equation}
\end{theorem}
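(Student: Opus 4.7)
The plan is to argue by contradiction, invoking Theorem~\ref{theorem0:nsc} to unpack what $\epsilon$-differential privacy buys us. Suppose that $\mathcal{A}$ is $\epsilon$-differentially private yet there exist $i_0\in V$ and a finite point $c_0\in\mathbf{R}$ with $\lim_{z\to c_0} f_{\theta_{i_0}}(z)=0$. By Theorem~\ref{theorem0:nsc}, both conditions $c_1$ and $c_2$ must hold, so $\mu(\Phi_{i_0}^0)=0$ and the ratio $f_{\theta_{i_0}+\hat{\sigma}}(z)/f_{\theta_{i_0}}(z)$ is bounded above by some finite $c_b$ uniformly in $\hat{\sigma}\in[-\sigma,\sigma]$ and in admissible $z$. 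My goal is to produce a specific pair $(z_0,\hat{\sigma})$ at which this ratio exceeds any prescribed $M>0$, thereby contradicting $c_2$.

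First, I would use $c_1$ together with sectional continuity of $f_{\theta_{i_0}}$ to locate an interval $I\subset(c_0-\sigma,c_0+\sigma)$ and a constant $m>0$ on which $f_{\theta_{i_0}}(w)\ge m$ for every $w\in I$. Since $\mu(\Phi_{i_0}^0)=0$, the open interval $(c_0-\sigma,c_0+\sigma)$ contains a full-measure set of points at which $f_{\theta_{i_0}}$ is strictly positive; excluding the at-most-countable jump set of a sectionally continuous density, I can pick a continuity point $w_0$ with $f_{\theta_{i_0}}(w_0)>0$ and $|w_0-c_0|<\sigma$ strictly, after which ordinary continuity at $w_0$ supplies an interval $I$ around $w_0$ on which $f_{\theta_{i_0}}\ge\tfrac{1}{2}f_{\theta_{i_0}}(w_0)=:m$.

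Next, using the limit hypothesis, I would choose $z_0$ close enough to $c_0$ that simultaneously (i) $|z_0-w_0|\le\sigma$, which is possible because $|c_0-w_0|<\sigma$ strictly, and (ii) $0<f_{\theta_{i_0}}(z_0)<m/M$; the strict positivity of $f_{\theta_{i_0}}(z_0)$ comes again from $\mu(\Phi_{i_0}^0)=0$ (pick $z_0$ outside the zero-point set), while the upper bound follows from the limit assumption. Setting $\hat{\sigma}:=z_0-w_0\in[-\sigma,\sigma]$ then yields
\[
\frac{f_{\theta_{i_0}+\hat{\sigma}}(z_0)}{f_{\theta_{i_0}}(z_0)}
=\frac{f_{\theta_{i_0}}(z_0-\hat{\sigma})}{f_{\theta_{i_0}}(z_0)}
=\frac{f_{\theta_{i_0}}(w_0)}{f_{\theta_{i_0}}(z_0)}
\ge\frac{m}{m/M}=M.
\]
Letting $M\to\infty$ contradicts the finite bound $c_b$ in $c_2$ and thus contradicts $\epsilon$-differential privacy of $\mathcal{A}$.

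The main obstacle is the first step: rigorously extracting an interval on which $f_{\theta_{i_0}}$ is uniformly bounded below. Pointwise positivity at a single $w_0$ together with ordinary continuity is all that is needed, but one has to justify picking $w_0$ strictly inside $(c_0-\sigma,c_0+\sigma)$ and at a continuity point of the (only sectionally continuous) density. The joint use of $c_1$ (ruling out $f_{\theta_{i_0}}\equiv 0$ on any subset of positive measure near $c_0$) together with the countability of the jump-point set of a sectionally continuous function should close this gap cleanly, and the remainder of the argument is essentially a direct calculation.
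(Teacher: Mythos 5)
Your proof is correct, but it takes a different route from the paper's. The paper proves Theorem~\ref{theorem0:nsc1} by a direct attack on the definition of $\epsilon$-differential privacy: it builds explicit $\hat{\sigma}$-adjacent vectors and a shrinking output interval $[y_{i_0}+c_0,\,y_{i_0}+c_0+\hat{\sigma}(k)]$ anchored at the vanishing point $c_0$, and shows the ratio of output probabilities diverges as $\hat{\sigma}(k)\rightarrow 0$. You instead reduce the statement to Theorem~\ref{theorem0:nsc}: you show the limit hypothesis forces the pointwise supremum in condition $c_2$, i.e.\ (\ref{factpingyi0}), to be infinite, by pairing a continuity point $w_0$ with $f_{\theta_{i_0}}(w_0)>0$ near $c_0$ (which exists by $c_1$ and sectional continuity) against points $z_0$ arbitrarily close to $c_0$ where $0<f_{\theta_{i_0}}(z_0)<m/M$. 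This is precisely the observation the paper makes only in the remark following the theorem (``(\ref{cedu1}) is actually a necessary condition of $c_2$, which can be easily proved by contradiction''), and you have supplied the argument the paper leaves implicit. Your version is arguably tighter: the paper's proof introduces the function $\hat{\epsilon}(\hat{\sigma})$ without a precise definition and bounds $\int_{c_0+\hat{\sigma}}^{c_0+\hat{\sigma}+\hat{\sigma}(k)} f_{\theta_{i_0}}(z)\,\text{d}z$ from below by $\hat{\epsilon}(\hat{\sigma})\hat{\sigma}(k)$ using only a statement about a maximum over $[c_0+\hat{\sigma},c_1]$, where a lower bound on the relevant subinterval is what is actually needed; your interval $I$ with $f_{\theta_{i_0}}\geq m$ supplies exactly that missing uniform lower bound. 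What the paper's approach buys in exchange is self-containedness: it exhibits the adversarial input pair and observation set explicitly, whereas yours inherits the full strength of the necessity direction of Theorem~\ref{theorem0:nsc}. Two small points you should make explicit: (i) condition $c_2$ excludes isolated points (``acnodes''), so you should note that the violation you construct is not isolated --- the limit hypothesis makes $f_{\theta_{i_0}}$ uniformly small on a whole punctured neighborhood of $c_0$, so the ratio exceeds $M$ on a set of positive measure of choices of $z_0$; and (ii) the degenerate case where $f_{\theta_{i_0}}$ vanishes identically on a neighborhood of $c_0$ already contradicts $c_1$ via (\ref{cedu0}) and can be dispatched first, which you implicitly do but should state.
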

\begin{proof}
Suppose that there exists a bounded constant $c_0\in (-\infty, +\infty)$, such that
 $\lim_{z\rightarrow c_0} f_{\theta_{i_0}}(z) =0$.
 Since (\ref{cedu0}) holds, we can let $f_{\theta_{i_0}}(c_0) =0$ and suppose that $f_{\theta_{i_0}}(z)$ is a continuous function in a small interval among $c_0$. Then, there exists an interval  $[c_0, c_1]$ and a small $\hat{\sigma} \leq {c_1-c_0\over 2}$ such that
 \[\max_{z\in[c_0, c_0+\hat{\sigma}]} f_{\theta_{i_0}}(z)\leq \hat{\epsilon}(\hat{\sigma})\]
and
 \[\max_{z\in[c_0+\hat{\sigma}, c_1]} f_{\theta_{i_0}}(z)> \hat{\epsilon}(\hat{\sigma})\]
 where $\hat{\epsilon}(\hat{\sigma})$ satisfies $\lim_{\hat{\sigma}\rightarrow 0} \hat{\epsilon}(\hat{\sigma})=0$.
Then, we construct a pair of $\hat{\sigma}$-adjacent state vector $x$ and $y$ with $x_{i_0}=y_{i_0}-\hat{\sigma}$ and $x_{i}=y_{i}$ (when $i\neq i_0$). Define the set $\mathcal{O}_{i_0}^k=[y_{i_0}+c_0, y_{i_0}+c_0+\hat{\sigma}(k)]$, where $\hat{\sigma}(k)\leq \hat{\sigma}$.  Based on (\ref{randoma0}), we have
\begin{align*}
{\Pr\{\mathcal {A}(x_{i_0})\in \mathcal {O}_{i_0}^k\} \over \Pr\{\mathcal {A}(y_{i_0})\in \mathcal {O}_{i_0}^k\}}&={\int_{y_{i_0}+c_0}^{y_{i_0}+c_0+\hat{\sigma}(k)} f_{x_{i_0}+\theta_{i_0}}(z) \text{d} z \over \int_{y_{i_0}+c_0}^{y_{i_0}+c_0+\hat{\sigma}(k)} f_{y_{i_0}+\theta_{i_0}}(z) \text{d} z}\\
&={\int_{c_0+\hat{\sigma}}^{c_0+\hat{\sigma}+\hat{\sigma}(k)} f_{\theta_{i_0}}(z) \text{d} z \over \int_{c_0}^{c_0+\hat{\sigma}(k)} f_{\theta_{i_0}}(z) \text{d} z}\\&\geq{\hat{\epsilon}(\hat{\sigma}) \hat{\sigma}(k) \over \hat{\epsilon}(\hat{\sigma}(k))\hat{\sigma}(k)}\geq{\hat{\epsilon}(\hat{\sigma}) \over \hat{\epsilon}(\hat{\sigma}(k))}.
 \end{align*}
Let $\hat{\sigma}(k)\rightarrow 0$, one obtains
\begin{align*}
&\lim_{\hat{\sigma}(k)\rightarrow 0}{\Pr\{\mathcal {A}(x_{i_0})\in \mathcal {O}_{i_0}^k\} \over \Pr\{\mathcal {A}(y_{i_0})\in \mathcal {O}_{i_0}^k\}} \geq \lim_{\hat{\sigma}(k)\rightarrow 0}{\hat{\epsilon}(\hat{\sigma}) \over \hat{\epsilon}(\hat{\sigma}(k))}=+\infty,
 \end{align*}
which implies that $\mathcal {A}$ is not $\epsilon$-differentially private. It leads to a contradiction. Thus,  (\ref{cedu1}) is a necessary condition when $\mathcal {A}$ is $\epsilon$-differentially private, which completes the proof.
\end{proof}

(\ref{cedu1}) is actually a necessary condition of  $c_2$, which can be easily proved by contradiction. This further explains why (\ref{cedu1})  is necessary to $\epsilon$-differential privacy.

\begin{theorem}\label{theorem0:asc}
 $\mathcal{A}$ is $\epsilon$-differentially private, if, for $\forall i\in V$, there exists a positive constant $c_b$ such that
 \begin{equation}\label{factpingyi0ad}
\sup_{\hat{\sigma}\in[-\sigma, \sigma]} {f_{\theta_i+\hat{\sigma}}(z)\over f_{\theta_i}(z)} \leq c_b.
\end{equation}
\end{theorem}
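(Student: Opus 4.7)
The plan is to recycle the sufficiency direction of the proof of Theorem \ref{theorem0:nsc}, observing that the hypothesis (\ref{factpingyi0ad}) is strong enough that the auxiliary zero-measure condition $c_1$ is no longer needed as a separate assumption. Concretely, I would start from a pair of $\sigma$-adjacent state vectors $x, y \in \Omega$, let $i_0 \in V$ be the unique coordinate at which they differ, and set $\hat{\sigma} = y_{i_0} - x_{i_0} \in [-\sigma,\sigma]$. Using the mutual independence of $\theta_1,\dots,\theta_n$, I would factor $\Pr\{\mathcal{A}(x)\in\mathcal{O}\}$ and $\Pr\{\mathcal{A}(y)\in\mathcal{O}\}$ exactly as in (\ref{prx})--(\ref{pry}) and cancel the $n-1$ identical marginal factors via (\ref{prxy}), leaving only a comparison of $\Pr\{x_{i_0}+\theta_{i_0}\in\mathcal{O}_{i_0}\}$ against $\Pr\{y_{i_0}+\theta_{i_0}\in\mathcal{O}_{i_0}\}$.

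Next, a translation of the integration variable would reduce this to comparing $\int_{\mathcal{O}_{i_0}-y_{i_0}} f_{\theta_{i_0}-\hat{\sigma}}(u)\,\mathrm{d}u$ with $\int_{\mathcal{O}_{i_0}-y_{i_0}} f_{\theta_{i_0}}(u)\,\mathrm{d}u$. Applying hypothesis (\ref{factpingyi0ad}) pointwise, with $-\hat{\sigma}\in[-\sigma,\sigma]$ in the role of the shift, yields $f_{\theta_{i_0}-\hat{\sigma}}(u) \leq c_b\, f_{\theta_{i_0}}(u)$ for every $u$. Integrating, substituting back into the decomposition, and setting $\epsilon = \log c_b$ then delivers exactly (\ref{diffprivacy}) with $\delta=0$.

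The point that deserves care --- and which is the main substantive difference from the sufficiency argument in Theorem \ref{theorem0:nsc} --- is the behavior at points where $f_{\theta_{i_0}}(u)=0$. In Theorem \ref{theorem0:nsc} the supremum in (\ref{factpingyi0}) is taken only over $\{f_{\theta_i}(z)\neq 0\}$, and condition $c_1$ was invoked to discard the complementary null set before integration. Here the supremum is unrestricted, so finiteness of $c_b$ automatically forces $f_{\theta_{i_0}+\hat{\sigma}}(z)=0$ at every $z$ with $f_{\theta_{i_0}}(z)=0$; the inequality $f_{\theta_{i_0}-\hat{\sigma}}(u)\leq c_b f_{\theta_{i_0}}(u)$ therefore holds unconditionally, and the integration step goes through verbatim without any preliminary excision. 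This is not a real obstacle so much as the one place where one must verify that the stronger, unrestricted form of the hypothesis truly subsumes $c_1$; once this is checked, the rest of the argument is a direct transcription of the final few lines of the sufficiency proof of Theorem \ref{theorem0:nsc}.
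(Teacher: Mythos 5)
Your proof is correct, but it takes a somewhat different route from the paper's. The paper does not redo the integral estimate: it verifies that (\ref{factpingyi0ad}) implies both hypotheses of Theorem \ref{theorem0:nsc} and then cites that theorem. Condition $c_2$ follows immediately because the unrestricted supremum dominates the restricted one in (\ref{factpingyi0}); condition $c_1$ is established by contradiction, arguing that if the zero set had positive measure there would be adjacent intervals $(a,b)$ and $(b,c)$ with $f_{\theta_i}=0$ on one and $f_{\theta_i}>0$ on the other, forcing the supremum in (\ref{factpingyi0ad}) to be infinite. You instead bypass the reduction entirely and rerun the factorization--translation--integration argument, observing that finiteness of the unrestricted supremum yields the pointwise domination $f_{\theta_{i_0}-\hat{\sigma}}(u)\leq c_b f_{\theta_{i_0}}(u)$ at \emph{every} $u$ (including zeros of $f_{\theta_{i_0}}$), so that no preliminary excision of a null set is needed. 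What your approach buys is self-containment and the avoidance of the paper's step from ``positive measure of the zero set'' to ``existence of an interval of zeros,'' which is the least robust link in the paper's argument (a measurable zero set of positive measure need not contain an interval); what the paper's approach buys is brevity and an explicit record that (\ref{factpingyi0ad}) strictly subsumes the pair $(c_1,c_2)$, which it uses afterwards to remark that the condition is sufficient but not necessary. Both arguments are sound, and in fact the unrestricted hypothesis forces $f_{\theta_i}>0$ everywhere (any zero would propagate by translation to all of $\mathbf{R}$, contradicting that $f_{\theta_i}$ integrates to one), which is stronger than either proof actually needs.
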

\begin{proof}
Note that if (\ref{factpingyi0ad}) can guarantee  both conditions $c_1$ and $c_2$ to be satisfied, then this theorem can be  proved from Theorem \ref{theorem0:nsc}.

First, we prove that (\ref{factpingyi0ad})  guarantees that condition $c_2$ is satisfied. By comparing (\ref{factpingyi0})  and  (\ref{factpingyi0ad}), we note that the constraint $f_{\theta_i}(z)\neq 0$ in  (\ref{factpingyi0}) is removed in  (\ref{factpingyi0ad}), which means that (\ref{factpingyi0ad}) provides a more general result than (\ref{factpingyi0}). Hence, one infers that condition $c_2$ is guaranteed by (\ref{factpingyi0ad}) directly.

Then, we prove that  (\ref{factpingyi0ad}) can also guarantee condition $c_1$. First, suppose that $c_1$ is not true,  then there exists a continuous interval such that $f_{\theta_i}(z)=0$ for $z$ in this interval. Second, since $f_{\theta_i}(z)$ is a PDF of a random variable, we have $f_{\theta_i}(z)\geq 0$ and $\int_{-\infty}^{\infty}f_{\theta_i}(z) \text{d} z=1$. Thus, there exists a continuous interval such that $f_{\theta_i}(z)>0$ holds in this interval. Then, we further infer that there exist two continuous intervals $(a, b)$ and $(b, c)$ such that $f_{\theta_i}(z)=0$ for $z\in(a, b)$ and $f_{\theta_i}(z)>0$ for $z\in(b, c)$.  It means that
\begin{equation}\label{factpingyi0add}
\sup_{\hat{\sigma}\in[-\sigma, \sigma]} {f_{\theta_i+\hat{\sigma}}(z)\over f_{\theta_i}(z)}\geq \sup_{\hat{\sigma}\in[-\sigma, \sigma]}{f_{\theta_i}(b+{\hat{\sigma}\over 2})\over f_{\theta_i}(b-{\hat{\sigma}\over 2})}=\infty,
\end{equation}
which leads to a contradiction. Therefore, we have that $c_1$ is also true under (\ref{factpingyi0ad}).

We thus have completed the proof.
\end{proof}

From the above proof, we know that  (\ref{factpingyi0ad})  is a stronger condition than  conditions $c_1$ and $c_2$. Thus, (\ref{factpingyi0ad})  is  a sufficient but not necessary condition.

\subsection{Sufficient Condition for $(\epsilon, \delta)$-Differential Privacy}
In this subsection, we study the relaxed differential privacy, named $(\epsilon, \delta)$-differential privacy.  We obtain the sufficient conditions to guarantee that $\mathcal {A}$ provides $(\epsilon, \delta)$-differential privacy, followed by the estimations of both the parameters $\epsilon$ and $\delta$.
\begin{theorem}\label{theorem1:scp}
If (\ref{factpingyi0}) holds, then $\mathcal {A}$ is $(\epsilon, \delta)$-differentially private, where $\epsilon$ and $\delta$ satisfy $ \epsilon\leq \log(c_b)$ and
\begin{equation}\label{adpc1}
\delta\leq\max_{i\in V} \oint_{\Phi_{i}^0} f_{\theta_{i}}(z+\sigma)  \text{d} z,
\end{equation}
respectively, and if (\ref{cedu0}) holds, we have $\delta=0$, i.e., $\mathcal {A}$ is $\epsilon$-differentially private.
\begin{proof}
Similarly, assume the $\sigma$-Adjacency state vectors $x$ and $y$ satisfy $y_{i_0}=x_{i_0}+ \sigma$  and $x_{i}=y_{i}, i\neq i_0$, and define $\mathcal {O}_l$ to be the $l$-th column element set of  $\mathcal {O}$  for $l=1, ..., n$. Then,  we have that (\ref{prx}), (\ref{pry}) and (\ref{prxy}) still hold true.

First, we consider the case that $\mu(\cup_{i\in V}\Phi_i^0)>0$.  Given the non-zero measure of the zero point set, (\ref{nef}) no longer holds true but we can obtain the following result,
\begin{align}\label{prxyz}
&\Pr\{x_{i_0}+\theta_{i_0}\in \mathcal {O}_{i_0}\} =\oint_{\mathcal {O}_{i_0}} f_{x_{i_0}+\theta_{i_0}}(z) \text{d} z
\nonumber \\&=\oint_{\mathcal {O}_{i_0}} f_{y_{i_0}-\sigma+\theta_{i_0}}(z) \text{d} z
\leq \oint_{\mathcal {O}_{i_0}} c_b f_{y_{i_0}+\theta_{i_0}}(z) \text{d} z\nonumber\\&+\oint_{\{\Phi_{i_0}^0+y_{i_0}\}\cap \mathcal {O}_{i_0}} f_{y_{i_0}-\sigma+\theta_{i_0}}(z) \text{d} z
\nonumber\\&\leq c_b  \Pr\{y_{i_0}+\theta_{i_0}\in \mathcal {O}_{i_0}\}+\oint_{\{\Phi_{i_0}^0+y_{i_0}\}} f_{y_{i_0}-\sigma+\theta_{i_0}}(z) \text{d} z
\nonumber \\&\leq c_b  \Pr\{y_{i_0}+\theta_{i_0}\in \mathcal {O}_{i_0}\}+\oint_{\Phi_{i_0}^0} f_{\theta_{i_0}}(z+\sigma)  \text{d} z,
\end{align}
where we have used the fact that  $f_{\theta_{i_0}-\sigma}(z)=f_{\theta_{i_0}}(z+\sigma)$. Then, one infers from (\ref{prx}), (\ref{pry}), (\ref{prxy})  and (\ref{prxyz})  that
\begin{align*}
\Pr\{\mathcal {A}(x)\in &\mathcal {O}\}=\prod_{l=1}^n \Pr\{\mathcal {A}(x_l)\in \mathcal {O}_l\} \\&=\Pr\{\mathcal {A}(x_{i_0})\in \mathcal {O}_{i_0}\} \prod_{l=1, l\neq i_0}^n \Pr\{\mathcal {A}(x_l)\in \mathcal {O}_l\} \\&\leq c_b\Pr\{\mathcal {A}(y_{i_0})\in \mathcal {O}_{i_0}\} \prod_{l=1, l\neq i_0}^n \Pr\{\mathcal {A}(y_l)\in \mathcal {O}_l\} \\&~~+\oint_{\Phi_{i_0}^0} f_{\theta_{i_0}}(z+\sigma)  \text{d} z \prod_{l=1, l\neq i_0}^n \Pr\{\mathcal {A}(y_l)\in \mathcal {O}_l\} \\&\leq c_b \Pr\{\mathcal {A}(y)\in \mathcal {O}\}+\max_{i\in V}\oint_{\Phi_{i}^0} f_{\theta_{i}}(z+\sigma)  \text{d} z,
\end{align*}
which means that $\mathcal {A}$  is $(\epsilon, \delta)$-differentially private.

Next, if $\mu(\cup_{i\in V}\Phi_i^0)=0$, we have that $\mu(\Phi_{i}^0)=0$ holds for $\forall i\in V$. Then, we obtain that
\[\delta=\max_{i\in V}\oint_{\Phi_{i}^0} f_{\theta_{i}}(z+\sigma)  \text{d} z=0,\]
i.e.,
$\mathcal {A}$  is $\epsilon$-differentially private.
\end{proof}
\end{theorem}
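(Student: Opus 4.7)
The plan is to follow the same factorization scheme used in the sufficiency direction of Theorem \ref{theorem0:nsc}, but allow the zero-point set $\Phi_{i_0}^0$ of $f_{\theta_{i_0}}$ to have positive measure. Fix a pair of $\sigma$-adjacent vectors $x,y$ with $y_{i_0}=x_{i_0}+\sigma$ and $x_i=y_i$ for $i\neq i_0$, and decompose $\Pr\{\mathcal{A}(x)\in\mathcal{O}\}$ as in (\ref{prx})--(\ref{prxy}). The coordinates $i\neq i_0$ contribute a common factor to both probabilities and can be ignored, so everything reduces to comparing $\Pr\{x_{i_0}+\theta_{i_0}\in\mathcal{O}_{i_0}\}$ with $\Pr\{y_{i_0}+\theta_{i_0}\in\mathcal{O}_{i_0}\}$.

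The key step is to split the integral $\oint_{\mathcal{O}_{i_0}} f_{x_{i_0}+\theta_{i_0}}(z)\,\mathrm{d}z$ according to whether the density $f_{y_{i_0}+\theta_{i_0}}(z)$ is zero or not. On the region where $f_{y_{i_0}+\theta_{i_0}}(z)\neq 0$, the shift hypothesis (\ref{factpingyi0}) (used with $\hat{\sigma}=-\sigma$) gives the pointwise bound
\begin{equation*}
f_{x_{i_0}+\theta_{i_0}}(z)=f_{y_{i_0}-\sigma+\theta_{i_0}}(z)\leq c_b\, f_{y_{i_0}+\theta_{i_0}}(z),
\end{equation*}
which yields the multiplicative factor $c_b$ (and hence $\epsilon\le\log c_b$) just as in Theorem \ref{theorem0:nsc}. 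On the remaining region, which lies inside the shifted zero-point set $\Phi_{i_0}^0+y_{i_0}$, I bound the contribution by the total mass of $f_{y_{i_0}-\sigma+\theta_{i_0}}$ on $\Phi_{i_0}^0+y_{i_0}$; a change of variables gives this mass as $\oint_{\Phi_{i_0}^0} f_{\theta_{i_0}}(z+\sigma)\,\mathrm{d}z$, using $f_{\theta_{i_0}-\sigma}(z)=f_{\theta_{i_0}}(z+\sigma)$.

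Multiplying back in the common factor $\prod_{i\neq i_0}\Pr\{\mathcal{A}(y_i)\in\mathcal{O}_i\}\le 1$, I obtain
\begin{equation*}
\Pr\{\mathcal{A}(x)\in\mathcal{O}\}\le c_b\Pr\{\mathcal{A}(y)\in\mathcal{O}\}+\oint_{\Phi_{i_0}^0} f_{\theta_{i_0}}(z+\sigma)\,\mathrm{d}z,
\end{equation*}
and taking a worst-case over $i_0\in V$ produces the stated $\delta$ in (\ref{adpc1}). Finally, if condition (\ref{cedu0}) holds, then $\mu(\Phi_{i}^0)=0$ for every $i$, so the residual integral vanishes, $\delta=0$, and the mechanism is $\epsilon$-differentially private, recovering the statement of Theorem \ref{theorem0:nsc}.

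The main obstacle I anticipate is bookkeeping around the zero-point set under the translation: the set on which the denominator density vanishes is $\Phi_{i_0}^0+y_{i_0}$, whereas the numerator density at those points equals $f_{\theta_{i_0}}(z-x_{i_0})=f_{\theta_{i_0}}(z-y_{i_0}+\sigma)$. I must change variables cleanly to express the residual as an integral over $\Phi_{i_0}^0$ of the shifted density, and also argue that enlarging the integration domain from $\{\Phi_{i_0}^0+y_{i_0}\}\cap\mathcal{O}_{i_0}$ to $\Phi_{i_0}^0+y_{i_0}$ is a valid (but possibly loose) upper bound; otherwise the rest of the argument is a direct adaptation of the sufficiency proof of Theorem \ref{theorem0:nsc}.
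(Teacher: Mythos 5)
Your proposal is correct and follows essentially the same route as the paper's own proof: the same coordinate factorization via (\ref{prx})--(\ref{prxy}), the same split of the integral over $\mathcal{O}_{i_0}$ into the region where the shifted-density ratio bound applies (yielding the factor $c_b$) and the residual region inside $\Phi_{i_0}^0+y_{i_0}$, the same enlargement of that residual domain and change of variables to obtain $\oint_{\Phi_{i_0}^0} f_{\theta_{i_0}}(z+\sigma)\,\mathrm{d}z$, and the same degeneration to $\delta=0$ under (\ref{cedu0}). The bookkeeping concerns you flag at the end are handled in the paper exactly as you propose to handle them, so there is nothing further to reconcile.
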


In the above theorem, note that
\begin{align*}
 &\oint_{\Phi_{i}^0}f_{\theta_{i}}(z+\sigma)   \text{d} z
\leq \oint_{\Phi_{i}^0\cap \{\mathbf{R}-\{\Phi_{i}^0+\sigma\}\}}f_{\theta_{i}}(z+\sigma)   \text{d} z
\\&\leq \max_{\Phi\subset \mathbf{R}, \mu(\Phi)=\mu(\Phi_{i}^0\cap \{\mathbf{R}-\{\Phi_{i}^0+\sigma\}\})}\oint_{ \Phi} f_{\theta_{i}}(z) \text{d} z.
\end{align*}
It means that $\delta$ satisfies
\begin{align}\label{worsdelt}
\delta &\leq \max_{\Phi\subset \mathbf{R}, \mu(\Phi)=\mu(\Phi_{i}^0\cap \{\mathbf{R}-\{\Phi_{i}^0+\sigma\}\}), i\in V}\oint_{ \Phi} f_{\theta_{i}}(z) \text{d} z\nonumber\\&\leq \max_{i\in V}\left[\mu(\Phi_{i}^0\cap \{\mathbf{R}-\{\Phi_{i}^0+\sigma\}\}) \sup_{z\in \mathbf{R}} f_{\theta_{i}}(z)\right].
\end{align}
Meanwhile, we have
\begin{align*}
\lim_{\sigma\rightarrow 0}\mu(\Phi_{i}^0\cap \{\mathbf{R}-\{\Phi_{i}^0+\sigma\}\})=0.
\end{align*}
Hence, when the PDF of the adding noise is given, we have
\begin{align*}
\lim_{\sigma\rightarrow 0}\delta=0,
\end{align*}
i.e., smaller $\sigma$-adjacency vectors can guarantee a smaller $\delta$ for  $(\epsilon, \delta)$-differential privacy.

From Theorem \ref{theorem1:scp}, it is known that  (\ref{factpingyi0}) is a sufficient condition of $(\epsilon, \delta)$-differential privacy. However, it should be pointed out that (\ref{factpingyi0}) is not a necessary condition of $(\epsilon, \delta)$-differential privacy (though it is a necessary condition of $\epsilon$-differential privacy). An example is Gaussian noise, which is $(\epsilon, \delta)$-differentially private noise \cite{TIT16, Dwork08}, but  (\ref{factpingyi0})  no longer holds for Gaussian noise.  The detailed analysis will be given in the next subsection.
Then, we give the other useful sufficient condition of $(\epsilon, \delta)$-differential privacy as follows, which can be used to prove that Gaussian noise is $(\epsilon, \delta)$-differential privacy.

\begin{theorem}\label{theorem1:scpad}
Let $\Theta=\Theta_0\cup \Theta_1$. Assume that
\begin{equation}\label{con3}
\oint_{\Theta_0} f_{\theta_i} (z) \text{d} z \leq \delta, \forall i\in V
\end{equation}
 and (\ref{factpingyi0}) holds when $\theta\in \Theta_1$, i.e.,
\begin{equation}\label{con4}
\sup_{\hat{\sigma}\in[-\sigma, \sigma], \theta\in \Theta_1} {f_{\theta_i+\hat{\sigma}}(z)\over f_{\theta_i}(z)} \leq c_b.
\end{equation}
Then, $\mathcal {A}$ is $(\epsilon, \delta)$-differentially private, where  $\epsilon\leq \log(c_b)$.
\end{theorem}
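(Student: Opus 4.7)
My approach is to mimic the sufficiency part of the proof of Theorem~\ref{theorem0:nsc}, but to split the noise support $\Theta$ into a ``good'' part $\Theta_{1}$ on which the pointwise ratio bound (\ref{con4}) applies, and a ``bad'' part $\Theta_{0}$ whose marginal mass is controlled by (\ref{con3}). The contribution from $\Theta_{1}$ will produce the multiplicative $c_b = e^{\epsilon}$ factor, while the contribution from $\Theta_{0}$ will be absorbed into the additive $\delta$ slack permitted by (\ref{diffprivacy}).

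\textbf{Reduction to a single coordinate.} Fix $\sigma$-adjacent vectors $x,y$ differing only at coordinate $i_{0}$, set $\hat{\sigma} := y_{i_{0}} - x_{i_{0}} \in [-\sigma,\sigma]$, and let $\mathcal{O}_i$ denote the $i$-th coordinate projection of $\mathcal{O}$. Exactly as in (\ref{prx})--(\ref{prxy}), the independence of the $\theta_i$ and the equality $x_i=y_i$ for $i\neq i_{0}$ allow the factors with $i\neq i_{0}$ to cancel; hence it suffices to establish the one-dimensional bound
\[
\Pr\{x_{i_{0}}+\theta_{i_{0}}\in\mathcal{O}_{i_{0}}\} \leq c_b\,\Pr\{y_{i_{0}}+\theta_{i_{0}}\in\mathcal{O}_{i_{0}}\} + \delta.
\]

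\textbf{Splitting and bounding the two pieces.} Changing variables to $u = z - x_{i_{0}}$ gives $\Pr\{x_{i_{0}}+\theta_{i_{0}}\in\mathcal{O}_{i_{0}}\} = \oint_{\mathcal{O}_{i_{0}}-x_{i_{0}}} f_{\theta_{i_{0}}}(u)\,\text{d} u$. Decompose the integration region as $R_{0}\cup R_{1}$ with $R_{k} := (\mathcal{O}_{i_{0}}-x_{i_{0}}) \cap \Theta_{k}$. The $R_{0}$ part obeys $\oint_{R_{0}} f_{\theta_{i_{0}}}(u)\,\text{d} u \leq \oint_{\Theta_{0}} f_{\theta_{i_{0}}}(u)\,\text{d} u \leq \delta$ by (\ref{con3}). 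For the $R_{1}$ part, applying (\ref{con4}) with $\hat{\sigma}\mapsto -\hat{\sigma}$ together with the shift identity $f_{\theta_{i_{0}}-\hat{\sigma}}(u) = f_{\theta_{i_{0}}}(u+\hat{\sigma})$ yields $f_{\theta_{i_{0}}}(u) \leq c_b\, f_{\theta_{i_{0}}}(u-\hat{\sigma})$ on $\Theta_{1}$, whence
\[
\oint_{R_{1}} f_{\theta_{i_{0}}}(u)\,\text{d} u \leq c_b \oint_{R_{1}} f_{\theta_{i_{0}}}(u-\hat{\sigma})\,\text{d} u \leq c_b\,\Pr\{y_{i_{0}}+\theta_{i_{0}}\in\mathcal{O}_{i_{0}}\},
\]
where the last step changes variables to $v = u - \hat{\sigma} + x_{i_{0}}$ and drops the restriction to $R_{1}$. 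Summing the two estimates and reinstating the matching coordinate factors gives $\Pr\{\mathcal{A}(x)\in\mathcal{O}\} \leq c_b\, \Pr\{\mathcal{A}(y)\in\mathcal{O}\}+\delta$, which is (\ref{diffprivacy}) with $\epsilon = \log c_b$.

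\textbf{Main obstacle.} The one real subtlety is orienting the ratio in the correct direction: (\ref{con4}) is stated symmetrically over $\hat{\sigma}\in[-\sigma,\sigma]$, which is essential, since it is what lets me convert $f_{\theta_{i_{0}}+\hat{\sigma}}(z)/f_{\theta_{i_{0}}}(z)\leq c_b$ into the form $f_{\theta_{i_{0}}}(u)\leq c_b f_{\theta_{i_{0}}}(u-\hat{\sigma})$ that is compatible with mapping the $x$-side integral onto the $y$-side integral. A secondary, benign, point is that (\ref{con4}) implicitly excludes the pathology $f_{\theta_{i_{0}}}(u-\hat{\sigma})=0$ with $f_{\theta_{i_{0}}}(u)>0$ on $\Theta_{1}$ (otherwise the supremum would be infinite), so no extra surgery on zero-measure singularities is required beyond the convention already used in Theorem~\ref{theorem0:nsc}.
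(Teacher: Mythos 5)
Your proof is correct and follows essentially the same route as the paper's: the same reduction to the single differing coordinate via independence, and the same split of the noise domain into $\Theta_0$ (whose mass is absorbed into the additive $\delta$ via (\ref{con3})) and $\Theta_1$ (where (\ref{con4}) yields the multiplicative factor $c_b=e^{\epsilon}$). The only difference is presentational — you partition the integration region explicitly and track the change of variables, whereas the paper writes the same decomposition in (somewhat abusive) conditional-probability notation — so no further comparison is needed.
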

\begin{proof}
Given any $\sigma$-adjacent state vectors $x$ and $y$ satisfying $x_{i_0}=y_{i_0}-\sigma$ and $x_{i}=y_{i}$ (when $i\neq i_0$), we have
\begin{align*}
&\Pr\{\mathcal {A}(x)\in \mathcal {O}\}=\prod_{l=1}^n \Pr\{\mathcal {A}(x_l)\in \mathcal {O}_l\} \\&=\Pr\{\mathcal {A}(x_{i_0})\in \mathcal {O}_{i_0}\} \prod_{l=1, l\neq i_0}^n \Pr\{\mathcal {A}(x_l)\in \mathcal {O}_l\} \\&=\left[\Pr\{\mathcal {A}(x_{i_0})\in \mathcal {O}_{i_0}|\theta\in \Theta_0\} +\Pr\{\mathcal {A}(x_{i_0})\in \mathcal {O}_{i_0}| \theta\in \Theta_1\} \right]\\ &~~~\times \prod_{l=1, l\neq i_0}^n \Pr\{\mathcal {A}(y_l)\in \mathcal {O}_l\}\\&\leq c_b\Pr\{\mathcal {A}(y_{i_0})\in \mathcal {O}_{i_0}\} \prod_{l=1, l\neq i_0}^n \Pr\{\mathcal {A}(y_l)\in \mathcal {O}_l\}\\ &~~~+ \oint_{\Theta_0} f_{\theta_i} (z) \text{d} z\prod_{l=1, l\neq i_0}^n \Pr\{\mathcal {A}(y_l)\in \mathcal {O}_l\}\\&\leq c_b\Pr\{\mathcal {A}(y)\in \mathcal {O}\}+\delta.
\end{align*}
Thus, we have completed the proof.
\end{proof}

Considering the conditions in Theorem \ref{theorem1:scpad},  for any kinds of noise random distribution, we have
\[\lim_{\mu(\Theta_0)\rightarrow \mu(\Theta)}\oint_{\Theta_0} f_{\theta_i} (z) \text{d} z =1,\]
and
\begin{align*}
\lim_{\mu(\Theta_1)\rightarrow 0}\sup_{\hat{\sigma}\in[-\sigma, \sigma], \theta\in \Theta_1} {f_{\theta_i+\hat{\sigma}}(z)\over f_{\theta_i}(z)} =1.
\end{align*}
Hence, it follows from Theorem \ref{theorem1:scpad} that using any kinds of random noise, $\mathcal {A}$  is $(0, 1)$-differentially private. This can also be followed from the truth that
\[\Pr\{\mathcal {A}(x)\in \mathcal {O}\}-\Pr\{\mathcal {A}(y)\in \mathcal {O}\}\leq 1\]
holds for any kinds of noise adding mechanism (because $0\leq\Pr\{\cdot\}\leq 1$ always holds true). Thus, it is meaningless to consider a $(0, 1)$-differentially private mechanism, since it can be satisfied by any random distributions.
Note that if $\Theta=\Theta_0(k)\cup \Theta_1(k)$ and $\Theta_0(k)\subset \Theta_0(k+1)$, where $\Theta_1(\infty)=\Theta$, then we have
\begin{align*}
\oint_{\Theta_0(k)} f_{\theta_i} (z) \text{d} z & \leq \oint_{\Theta_0(k+1)} f_{\theta_i} (z) \text{d} z \\ &  \leq \oint_{\Theta_0(\infty)} f_{\theta_i} (z) \text{d} z=1
\end{align*}
while
\begin{align*}
\sup_{\hat{\sigma}\in[-\sigma, \sigma], \theta\in \Theta_1(k)} {f_{\theta_i+\hat{\sigma}}(z)\over f_{\theta_i}(z)} &\geq \sup_{\hat{\sigma}\in[-\sigma, \sigma], \theta\in \Theta_1(k+1)} {f_{\theta_i+\hat{\sigma}}(z)\over f_{\theta_i}(z)}\\&\geq \sup_{\hat{\sigma}\in[-\sigma, \sigma], \theta\in \Theta_1(\infty)} {f_{\theta_i+\hat{\sigma}}(z)\over f_{\theta_i}(z)}=1,
\end{align*}
where we have used the fact that $\Theta_1(\infty)=\emptyset$.
It means that there exists an increasing sequence $\delta(k)$ and an decreasing sequence $\epsilon(k)=\log(c(k))$ satisfying $\lim_{k\rightarrow \infty}\delta(k)=1$ and $\lim_{k\rightarrow \infty}\epsilon(k)=0$, respectively, such that  $(\epsilon(k), \delta(k))$-differential privacy is guaranteed by $\mathcal{A}$.  However, it should be pointed out that different noise distribution can guarantee the different smallest $\delta$ and different corresponding $\epsilon$ of $(\epsilon, \delta)$-differential privacy. In Theorem \ref{theorem1:scp}, the estimation of the upper bounds for  $\delta$ and $\epsilon$ can be tighten for some special distributions (e.g., uniform distribution), which will be illustrated in the following subsection.

\subsection{Case Studies}
From the above two theorems, it is not difficult to determine whether the added noise can guarantee the differential privacy of a random mechanism or not. In the following, we analyze differential privacy of some common random noises, including Laplacian, Gaussian, and uniform distribution noise.

First, we consider the Laplacian noise adding mechanism. Assume that the PDF is $f_{\theta_i}(z)={1\over 2b} e^{-{|z-a|\over b}}$, where $a$ and $b$ are two constants.
\begin{figure}[ht]
\begin{center}
\includegraphics[width=0.35\textwidth]{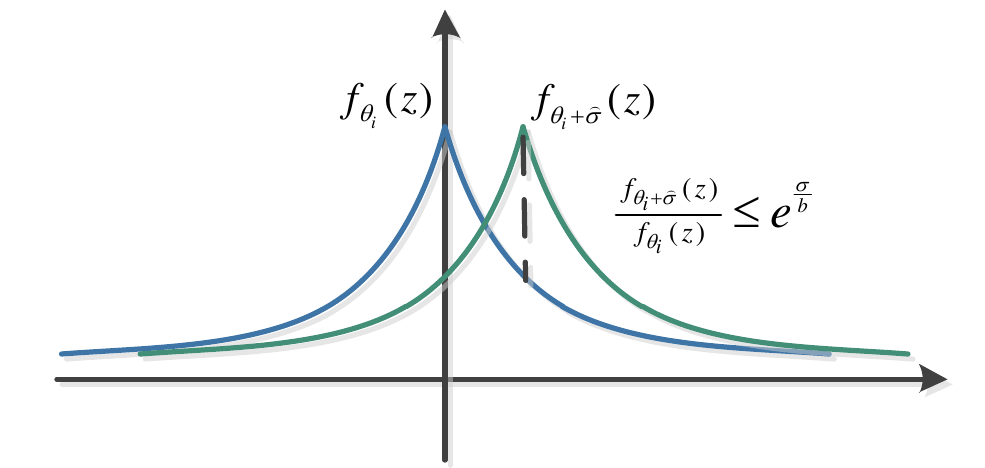}
\vspace{-5pt}
\caption{Laplacian noise: $\epsilon$-differentially private.} \label{sspdf}
\end{center}
\end{figure}
We check the conditions $c_1$ and $c_2$, respectively. From Fig. \ref{sspdf}, it is clear that $c_1$ holds true due to the continuous and non-zero of the PDF of Laplacian noise. Then, note that for $\forall \hat{\sigma}\in [-\sigma, \sigma]$, we have
\begin{align*}
 |{f_{\theta_i+\hat{\sigma}}(z)\over f_{\theta_i}(z)}| &={{1\over 2b} e^{-{|z-\hat{\sigma}-a|\over b}}\over {1\over 2b} e^{-{|z-a|\over b}}}
\\&=e^{{|z-a|-|z-\hat{\sigma}-a|\over b}}  \leq e^{|\sigma|\over b}.
\end{align*}
It means that $c_2$ condition also holds true.
Hence, from Theorem \ref{theorem0:nsc}, it follows that Laplacian noise is an $\epsilon$-differentially private noise.

Then, we consider Gaussian noise. Assume that the PDF of the noise is $f_{\theta_i}(z)={1\over b\sqrt{2 \pi}} e^{-{(z-a)^2\over 2 b^2}}$. Similarly, one infers that $c_1$ holds true for Gaussian noise. Note that
\begin{align*}
 |{f_{\theta_i+\hat{\sigma}}(z)\over f_{\theta_i}(z)}| &={{1\over b\sqrt{2 \pi}} e^{-{(z-\hat{\sigma}-a)^2\over 2 b^2}} \over  {1\over b\sqrt{2 \pi}} e^{-{(z-a)^2\over 2 b^2}}}
=e^{{(z-a)^2-(z-\hat{\sigma}-a)^2\over 2b^2}} \\&=e^{\hat{\sigma} (2z-\hat{\sigma}-2a)\over 2b^2},
\end{align*}
which means that
\begin{align*}
\sup_{\hat{\sigma}\in[-\sigma, \sigma], f_{\theta_i}(z)\neq 0} {f_{\theta_i+\hat{\sigma}}(z)\over f_{\theta_i}(z)}\geq \lim_{z\rightarrow \infty} e^{\hat{\sigma} (2z-\hat{\sigma}-2a)\over 2b^2}=\infty.
\end{align*}
Hence, from Theorem \ref{theorem0:nsc}, it follows that Gaussian noise is not an $\epsilon$-differentially private noise.
\begin{figure}[ht]
\begin{center}
\includegraphics[width=0.35\textwidth]{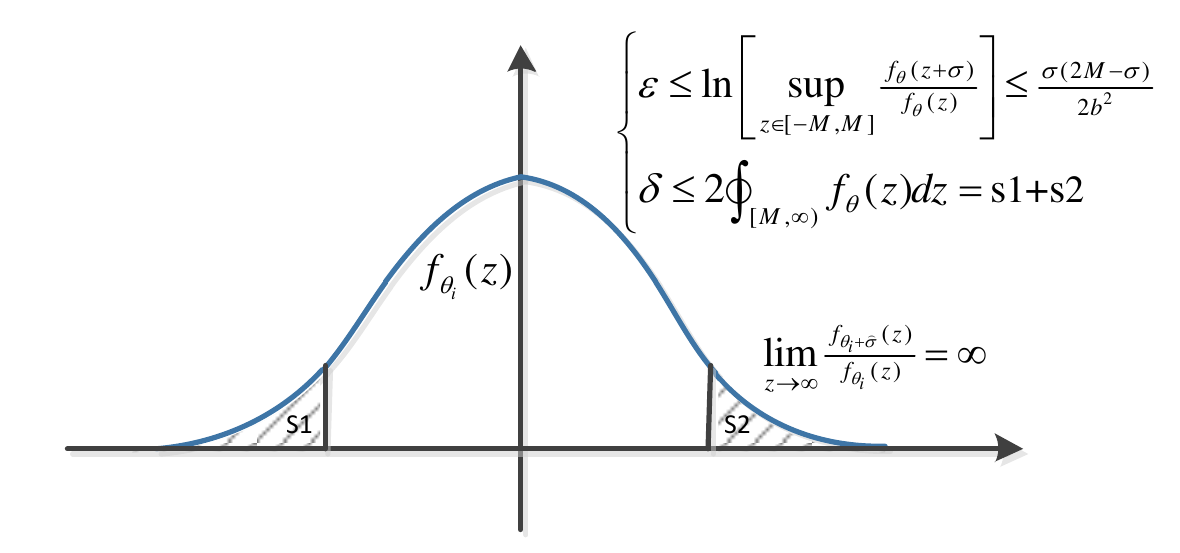}
\vspace{-5pt}
\caption{Gaussian noise: $(\epsilon, \delta)$-differentially private.} \label{ssgspdf}
\end{center}
\end{figure}
However, as shown in Fig. \ref{ssgspdf}, there exists a large constant $M$ such that $\epsilon$ is bounded by
\[\epsilon\leq \ln (\max e^{\hat{\sigma} (2z-\hat{\sigma}-2a)\over 2b^2})\leq  {\sigma (2M-\sigma)\over 2b^2},\]
for $z\in [-M, M]$, and $\delta$ is bounded by
\begin{align*}
\delta&\leq \oint_{(-\infty, -M]\cup [M, \infty)} f_{\theta_i} (z) \text{d} z\\&={1\over b\sqrt{2 \pi}}  \oint_{(-\infty, -M]\cup [M, \infty)} e^{-{(z-a)^2\over 2 b^2}} \text{d} z,
\end{align*}
which is a small value. It means that the conditions in Theorem \ref{theorem1:scpad} can be satisfied. Thus, we infer that Gaussian noise is an  $(\epsilon, \delta)$-differentially private noise.

Lastly, consider the uniform distribution noise with its PDF as ${1\over b-a}$. Clearly, $c_1$  is not true due to the infinite measure of the zero-point set. Hence,  uniform distribution is not an $\epsilon$-differentially private noise. Then, we check the conditions in Theorem \ref{theorem1:scp}. As shown in Fig. \ref{ssunpdf}, it is found that for an uniform distribution noise
\begin{align*}
\sup_{\hat{\sigma}\in[-\sigma, \sigma], f_{\theta_i}(z)\neq 0} {f_{\theta_i+\hat{\sigma}}(z)\over f_{\theta_i}(z)}={{1\over b-a} \over {1\over b-a}}=1
\end{align*}
and
\[\max_{i\in V} \oint_{\Phi_{i}^0} f_{\theta_{i}}(z+\sigma)  \text{d} z = {\sigma \over b-a}.\]
It means that the upper bounds of both $\epsilon$ and $\delta$  in Theorem \ref{theorem1:scp} are tight.
Thus, one infers that uniform noise is an  $(\epsilon, \delta)$-differentially private, where $\epsilon=0$ and $\delta={\sigma \over b-a}$.
\begin{figure}[ht]
\begin{center}
\includegraphics[width=0.35\textwidth]{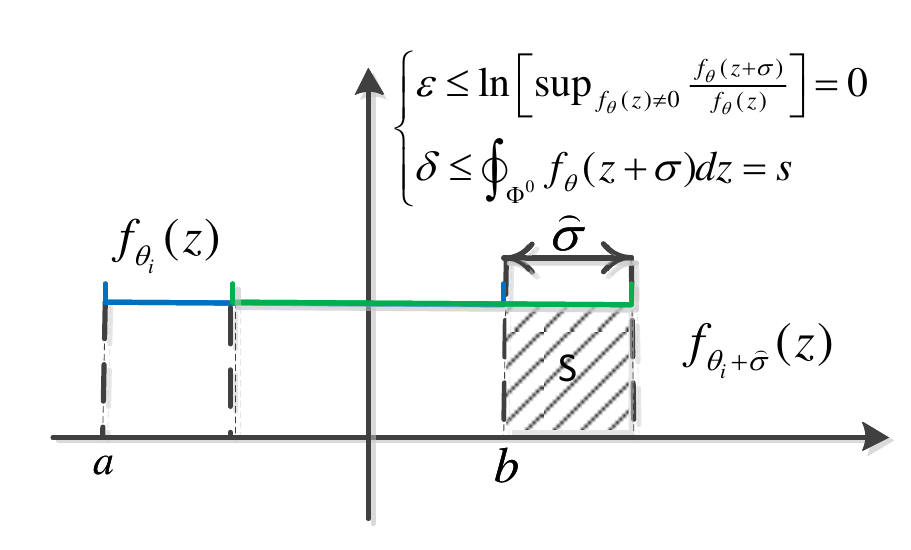}
\vspace{-5pt}
\caption{Uniform noise: $(0, \delta)$-differentially private.} \label{ssunpdf}
\end{center}
\end{figure}
 Then, it is noted that $\delta$ is a decreasing function of $b-a$ and satisfies
 \[\lim_{b-a\rightarrow \infty}\delta=0.\]
 Hence, for any small $\delta$, we can find a corresponding $(0, \delta)$-differentially private uniform noise.
 But, it should be pointed out that when the value of $b-a$ increases, the variance of the uniform distribution ($={(b-a)^2 \over 12}$) increases.

The above analysis shows how to use the developed theory to determine differential privacy of several well-known noise distribution. Using our theory can obtain the same results for a few special cases as those proved in the existing work, which verifies the effectiveness of the proposed theory. In addition, different from the previous approach, our method can determine the differential privacy of the randomized mechanism with any fixed distribution of noise by checking the conditions, and thus is an efficient criterion of differential privacy. For example, we consider the following noise distribution,
\begin{equation*}
f_{\theta_i}(z)=\left\{ \begin{aligned}
       & {z\over 2} e^{-z},  && z \geq0;\\
       & {-z\over 2} e^{z},  &&  z\leq 0.
                          \end{aligned} \right.
                          \end{equation*}
From the definition, it is not obvious to analyze its differential privacy directly. But, from Theorem \ref{theorem0:nsc1}, we easily infer that it is not $\epsilon$-differentially private, since $f_{\theta_i}(0)=0$, and thus it does not satisfy the necessary condition given in the theory.

\section{Application on privacy-preserving consensus algorithm}\label{sec:ap}
Consensus is an efficient algorithm for distributed computing and control, which refers to  the action that nodes in the network reach a global agreement regarding a certain opinion using their local neighbors' information only  \cite{olfati2007consensus}\@.  Consensus has been applied in a variety of areas, e.g., distributed energy management \cite{zhao2015tsg}, scheduling\cite{hetsp15},  and clock synchronization  \cite{schenato2011average, he2011times, carli14tac}. Recently, the privacy-preserving consensus problem has been studied, which aims to guarantee that the privacy of initial state is preserved and at the same time a consensus can still be achieved \cite{ny14tac, Nozari16, yilin15tac}. The basic idea is to add random noises to the real state value during the communication for privacy preserving, the same as (\ref{randoma0}).  This motivates us to adopt the developed theories in the above section to analyze differential privacy of the privacy-preserving consensus algorithm.

\subsection{Privacy-preserving Consensus Algorithm}
A network is abstracted by an undirected and connected graph, $G = (V, E)$, where $V$ is the set of nodes and $E$ is the set of the communication links (edges) between nodes. An edge $(i, j)\in E$ iff nodes $i$ and $j$ can communicate with each other.  Let $N_i$ be the neighbor set of node $i$, defined by $N_i=\{j| j\in V, (i, j)\in E, j\neq i\}$. Let $|V|=n\geq 3$ be the number of nodes in the network and $x_i(0)\in \mathbf{R}$ be the initial state of node $i$. Let $x(0)=[x_1(0), ...., x_n(0)]^T \in \Omega_x^0\subseteq \mathbf{R}^n$.

\textbf{General Consensus Algorithm:}  For a general consensus algorithm, each node will communicate with its neighbor nodes and update its state based on the received information to obtain a consensus of all initial state values. The dynamic iteration equation is given by
\begin{align}\label{generalconsensus}
& x_i(k+1)=w_{ii}x_i(k)+\sum_{j\in N_i} w_{ij} x_j(k),
\end{align}
for $\forall i\in V$, which can be written in the matrix form as
\begin{align}\label{matrix_generalconsensus}
& x(k+1)=Wx(k),
\end{align}
where $w_{ii}$ and $w_{ij}$ are  weights, and $W$ is the weight matrix. It is well known from \cite{Olshevsky11} that, if,
i) $w_{ii}>0$ and $w_{ij}>0$; and ii)  $W$ is a doubly stochastic matrix,
then an average consensus can be achieved by (\ref{matrix_generalconsensus}), i.e.,
\begin{align}\label{gcconvergence}
& \lim_{k\rightarrow \infty}x(k)= {\sum_{\ell=1}^n x_\ell(0)\over n} \mathbf{1}=\bar{x}.
\end{align}

\textbf{Privacy-preserving Consensus (PC) Algorithm:} When the privacy of nodes' initial states are concerned, each node may be unwilling to release its real state to the neighbor nodes at each iteration.  To preserve the privacy of  nodes' initial states,  a widely used approach is to add a random noise to the real state when a node needs to communicate with its neighbor nodes  \cite{yilin15tac}. Hence, we introduce a common privacy-preserving consensus algorithm as follows:
\begin{align} \label{pca}
\mathcal{PC}:~ \left\{ \begin{aligned}
        &x^+(k) =x(k)+\theta(k) \\
          & x(k+1) = W x^+(k)
                          \end{aligned} \right.
                          \end{align}
A privacy-preserving average consensus algorithm is to design the adding noise process (including the noise distribution and the correlations among noises in different iterations),  such that the goal of (\ref{gcconvergence}) is achieved under (\ref{pca}).

\subsection{Privacy Conditions of  Consensus}

We define the input and the output sequences of each node $i$ in privacy-preserving consensus algorithm  (\ref{pca}) until iteration $k$ as
\begin{equation}
\mathcal{I}_{x_i}^{in}(k)=\{x_i(0), \theta_i(0), ..., \theta_i(k)\},
\end{equation}
and
\begin{equation}
\mathcal{I}_{x_i}^{out}(k)=\{x_i^+(0), ...,  x_i^+(k)\},
\end{equation}
respectively. Then, $\mathcal{I}_x^{in}(k)=\{x(0), \theta(0), ..., \theta(k)\}$ is the system input and $\mathcal{I}_x^{out}(k)=\{x^+(0), ..., x^+(k)\}$ is the system output. Let the information set of the adding noises  for node $i$ be $\mathcal{I}_{i_{noise}}^{in}(k)=\{\theta_i(0), ..., \theta_i(k)\}$. Let $f_{\theta_i(k)}(z)$  be the PDF of $\theta_i(k)$. Then, we have $\textrm{Range}(\mathcal{PC})=\Omega_x^0\oplus \Theta(0)\oplus ... \oplus\Theta(k) \oplus ...$, where $\oplus$ denotes the plus of the elements in two sets.

By referring to \cite{Nozari16}, we introduce the definition of $(\epsilon, \delta)$-differential privacy for a consensus algorithm as follows.
\begin{definition}
A PC algorithm (\ref{pca}) is $(\epsilon, \delta)$-differentially private if, for any pair $x$ and $y$ of $\sigma$-adjacent initial state vector and any set $\mathcal {O}\subseteq \mathbf{R}^{n\times \infty}$,
 \begin{equation}\label{diffprivacy}
\Pr\{\mathcal{I}_{x}^{out}(\infty)  \in \mathcal{O}\}\leq e^{\epsilon}\Pr\{\mathcal{I}_{y}^{out}(\infty) \in \mathcal{O}\}+\delta.
\end{equation}
If $\delta=0$, we say that (\ref{pca}) is $\epsilon$-differentially private.
 \end{definition}

First, we give the necessary condition of $\epsilon$-differential privacy for  algorithm (\ref{pca}).
 \begin{theorem}\label{theorem3:nsp}
If algorithm (\ref{pca}) is $\epsilon$-differentially private, then for $\forall k\geq0$, the random noise vector $\sum_{l=0}^{k} W^{k-l} \theta(l)$ should satisfy conditions $c1$ and $c2$.
 \begin{proof}
  Let $\mathcal{O}^{n\times k}\subseteq \mathbf{R}^{n\times \mathbf{k}}$ for $k>0$ and $\mathcal{O}^{n\times 0}\subseteq \mathbf{R}^{n}$ for $k=0$. For any pair $x$ and $y$ of $\sigma$-adjacent initial state vector, we have
\begin{align}\label{addn}
&\Pr\{\mathcal{I}_{x}^{out}(\infty)  \in \mathcal{O}\}\leq e^{\epsilon}\Pr\{\mathcal{I}_{y}^{out}(\infty) \in \mathcal{O}\}, \forall \mathcal {O}\subseteq \mathbf{R}^{n\times \infty}
\nonumber\\\Leftrightarrow &\Pr\{\mathcal{I}_{x}^{out}(k)  \in \mathcal{O}^{n\times k}\}\leq e^{\epsilon}\Pr\{\mathcal{I}_{y}^{out}(k) \in \mathcal{O}^{n\times k}\},  \\& \forall k\geq0, \mathcal{O}^{n\times k}\subseteq \mathbf{R}^{n\times k} \nonumber\\
\Rightarrow & \Pr\{x^+(k)  \in \mathcal{O}^{n\times 1}\}\leq e^{\epsilon}\Pr\{y^+(k)  \in \mathcal{O}^{n\times 1}\}, \\& \forall k\geq0, \mathcal{O}^{n\times k}\subseteq \mathbf{R}^{n\times k}. \nonumber
\end{align}

From (\ref{pca}), we have
\begin{align}\label{xkwt}
x^+(k) &=x(k)+\theta(k)\nonumber\\
&= W[x(k-1)+\theta(k-1)]+\theta(k)
\nonumber\\&= W^kx(0)+ \sum_{l=0}^{k} W^{k-l} \theta(l)
\nonumber\\&= x(0)+(W^k-I)x(0)+ \sum_{l=0}^{k} W^{k-l} \theta(l),
\end{align}
where $I$ is an identity matrix. From (\ref{addn}), (\ref{xkwt}) and Theorem \ref{theorem0:nsc}, we infer that $(W^k-I)z+\sum_{l=0}^{k} W^{k-l} \theta(l), z=x, y$  should satisfy conditions $c1$ and $c2$ for any  $\sigma$-adjacent state vector $x$ and $y$. It follows that
$\sum_{l=0}^{k} W^{k-l} \theta(l)$ satisfies conditions $c1$ and $c2$.
 \end{proof}
 \end{theorem}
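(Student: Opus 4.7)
The plan is to reduce the multi-step privacy requirement to a single-instant noise-adding mechanism and then apply Theorem \ref{theorem0:nsc}. First I would argue that if the trajectory $\mathcal{I}_x^{out}(\infty)$ is $\epsilon$-differentially private, then each finite-time marginal is also $\epsilon$-differentially private: for any measurable $\mathcal{O}^{n\times 1}\subseteq \mathbf{R}^n$, the cylinder set $\mathcal{O}^{n\times 1}\times \mathbf{R}^{n\times \infty}$ lies in the output $\sigma$-algebra, and its probability equals $\Pr\{x^+(k)\in\mathcal{O}^{n\times 1}\}$. Hence, for every fixed $k\geq 0$, the single-shot mechanism $x(0)\mapsto x^+(k)$ must itself be $\epsilon$-differentially private with respect to $\sigma$-adjacent initial states.

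Next I would expand the recursion in (\ref{pca}) to obtain the closed form $x^+(k)=W^k x(0)+\eta_k$ where $\eta_k:=\sum_{l=0}^{k}W^{k-l}\theta(l)$, and rewrite it in additive form
\[
x^+(k)=x(0)+\bigl[(W^k-I)x(0)+\eta_k\bigr].
\]
Fixing the initial condition $z\in\{x(0),y(0)\}$, the bracketed term is a deterministic translation of the random vector $\eta_k$; its joint PDF is $f_{\eta_k}\!\bigl(\,\cdot\,-(W^k-I)z\bigr)$. Thus the single-shot mechanism has the exact form $\mathcal{A}(z)=z+\tilde\eta_z$ required by Theorem \ref{theorem0:nsc}, with the noise PDF depending on $z$ only through a shift.

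Applying Theorem \ref{theorem0:nsc} to this mechanism forces conditions c1 and c2 to hold for the PDF of $\tilde\eta_z=(W^k-I)z+\eta_k$ for both $z=x(0)$ and $z=y(0)$. Finally, I would observe that both c1 (zero Lebesgue measure of the zero-point set) and c2 (a uniform upper bound on the shifted PDF ratio) are invariant under deterministic translations of the noise, because translation is a measure-preserving bijection of $\mathbf{R}$ and the supremum in (\ref{factpingyi0}) is taken over all $\hat\sigma\in[-\sigma,\sigma]$. Consequently, c1 and c2 must hold on the PDF of $\eta_k=\sum_{l=0}^{k}W^{k-l}\theta(l)$ itself, which is the desired conclusion.

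The main obstacle I expect is the shift-dependence of the effective noise on the input: $\tilde\eta_z$ is not independent of $z$, so Theorem \ref{theorem0:nsc} does not apply verbatim to the joint $(z,\tilde\eta_z)$ mechanism. The resolution is that for each fixed $z$ the noise reduces to a pure translation of $\eta_k$, and c1/c2 are translation-invariant, so one can apply Theorem \ref{theorem0:nsc} separately for $z=x(0)$ and $z=y(0)$ and then transfer the conclusion back to $\eta_k$. A secondary technicality is the marginalization step from $\mathcal{I}_x^{out}(\infty)$ down to $x^+(k)$, which is routine once the cylinder-set formulation in (\ref{addn}) is written out.
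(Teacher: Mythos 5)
Your proposal follows essentially the same route as the paper's proof: marginalize the trajectory-level privacy guarantee down to the single-instant output $x^+(k)$ via cylinder sets, expand the recursion to isolate the accumulated noise $\sum_{l=0}^{k}W^{k-l}\theta(l)$, and invoke Theorem \ref{theorem0:nsc} to force conditions c1 and c2 on its PDF. Your explicit remark that c1 and c2 are invariant under the deterministic shift $(W^k-I)z$, which is what lets you transfer the conclusion from the $z$-dependent effective noise back to $\sum_{l=0}^{k}W^{k-l}\theta(l)$ itself, is a point the paper's proof passes over silently but does not constitute a different approach.
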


Then, we give the necessary condition of average consensus for algorithm (\ref{pca}).
  \begin{theorem}\label{theorem4:cn}
Using algorithm (\ref{pca}), if
\begin{align}\label{covp}
\Pr\{\lim_{k\rightarrow \infty}x(k)=\bar{x}\}=1,
\end{align}
i.e., the average consensus is achieved in probability, then
\[\Pr\{\lim_{k\rightarrow \infty} \sum_{l=0}^{k-1} W^{k-l} \theta(l)=0\}=1,\]
and $\Pr\{\lim_{k\rightarrow \infty}W\theta(k)=0\}=1$,
i.e., the add noise should equal $0$ or be the $0$-eigenvector of $W$ when $k\rightarrow \infty$.
\begin{proof}
Under algorithm (\ref{pca}), we have
\begin{align*}
\lim_{k\rightarrow \infty}x(k)=&\lim_{k\rightarrow \infty}\left[W^{k}x(0)+ \sum_{l=0}^{k-1} W^{k-l} \theta(l)\right]\\
=&\lim_{k\rightarrow \infty} W^kx(0)+\lim_{k\rightarrow \infty}   \sum_{l=0}^{k-1} W^{k-l} \theta(l)\\
=&\bar{x}+\lim_{k\rightarrow \infty}  \sum_{l=0}^{k-1} W^{k-l} \theta(l),
\end{align*}
where set $\sum_{l=1}^{-1}(\cdot)=0$.
Then, from (\ref{covp}), it follows that
\begin{align*}
\Pr\{\lim_{k\rightarrow \infty}  \sum_{l=0}^{k-1} W^{k-l} \theta(l)=0\} &=\Pr\{\lim_{k\rightarrow \infty}[x(k)-\bar{x}]=0\}\\&=1.
\end{align*}
Then, note that when $\sum_{l=0}^{k-1} W^{k-l} \theta(l)=0$, we have $W\theta(\infty)=0$. Hence, we have $\Pr\{\lim_{k\rightarrow \infty}W\theta(k)=0\}=1$.
\end{proof}
 \end{theorem}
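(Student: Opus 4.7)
My plan is to treat the two claims separately and reduce everything to the closed-form expansion of the state trajectory that was already derived in (\ref{xkwt}). Starting from $x(k)=W^k x(0)+\sum_{l=0}^{k-1}W^{k-l}\theta(l)$, I would first note that $W$ is doubly stochastic with positive diagonal and connected positive off-diagonal support (as recalled just before (\ref{gcconvergence})), so $W^k x(0)\to \bar x$ deterministically. Hence almost sure convergence of $x(k)$ to $\bar x$ is equivalent to almost sure convergence of the noise accumulator $S_k\triangleq \sum_{l=0}^{k-1}W^{k-l}\theta(l)$ to $0$. Taking the event in the hypothesis (\ref{covp}) and applying continuity of limits gives the first conclusion immediately.

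For the second claim, the key identity is telescoping: by splitting off the last term in $S_{k+1}$, one obtains
\begin{equation*}
S_{k+1}=\sum_{l=0}^{k-1}W^{k+1-l}\theta(l)+W\theta(k)=W S_k+W\theta(k).
\end{equation*}
Rearranging gives $W\theta(k)=S_{k+1}-W S_k$. On the probability-one event where $S_k\to 0$, both $S_{k+1}\to 0$ and $W S_k\to 0$ (the latter because $W$ is a bounded linear operator on $\mathbf{R}^n$ and so continuous), whence $W\theta(k)\to 0$ almost surely, giving $\Pr\{\lim_{k\to\infty}W\theta(k)=0\}=1$. This also yields the qualitative interpretation stated in the theorem: asymptotically $\theta(k)$ must either vanish or lie in the null space (the $0$-eigenspace) of $W$.

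I expect no serious technical obstacle; the only subtle point is to be explicit that the convergence statements are almost sure and that the deterministic part $W^k x(0)$ can be pulled out of the limit without measurability issues. If a reader wanted a more detailed accounting, one would invoke that the set $\{\omega:\lim_k x(k)(\omega)=\bar x\}$ has probability one by hypothesis, intersect it with a probability-one set on which the recursion holds (a full-measure set by construction), and carry out the algebraic manipulations pointwise on this intersection. The final sentence of the theorem, which connects $W\theta(k)\to 0$ to $\theta(k)$ being a $0$-eigenvector of $W$ in the limit, follows directly from the definition of the null space and is essentially a reading of the equation $W\theta(\infty)=0$.
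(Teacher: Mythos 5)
Your proof is correct, and the first half follows the paper's own route exactly: expand $x(k)=W^kx(0)+\sum_{l=0}^{k-1}W^{k-l}\theta(l)$ as in (\ref{xkwt}), pull out the deterministic limit $W^kx(0)\rightarrow\bar{x}$, and identify the event $\{\lim_k x(k)=\bar{x}\}$ with $\{\lim_k S_k=0\}$ where $S_k=\sum_{l=0}^{k-1}W^{k-l}\theta(l)$. Where you genuinely diverge is the second claim. The paper disposes of it in one sentence --- ``when $\sum_{l=0}^{k-1}W^{k-l}\theta(l)=0$, we have $W\theta(\infty)=0$'' --- which asserts the conclusion rather than deriving it: the sum tending to zero does not by itself say anything about the last summand without an additional argument. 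Your telescoping identity
\begin{equation*}
S_{k+1}=WS_k+W\theta(k),\qquad\text{hence}\qquad W\theta(k)=S_{k+1}-WS_k,
\end{equation*}
supplies exactly that missing step: on the probability-one event where $S_k\rightarrow 0$, continuity of the linear map $W$ gives $WS_k\rightarrow 0$ and therefore $W\theta(k)\rightarrow 0$ pointwise on that event. So your argument proves the same statement but closes a gap the paper leaves open, at the cost of one extra algebraic identity; the paper's version is shorter but relies on the reader accepting the jump from ``the accumulated noise vanishes'' to ``the instantaneous noise lies asymptotically in the null space of $W$.'' Your concluding remarks about measurability and about reading $W\theta(\infty)=0$ as a null-space condition match the paper's intent.
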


Next, by comparing the necessary conditions of $\epsilon$-differential privacy and average consensus, an impossibility result is given as follows.

\textbf{Impossibility Result:} From Theorem \ref{theorem4:cn}, one infers that the added noise $\theta(k)$ should converge to $0$ or the $0$-eigenvector of $W$, denoted by $\lambda_0$, i.e., $\lim_{k\rightarrow \infty}\theta(k)=0$ or $\lim_{k\rightarrow \infty}\theta(k)=\lambda_0$.
Note that
\begin{align*}
 \lim_{k\rightarrow \infty}\sum_{l=0}^{k} W^{k-l} \theta(l) =& \lim_{k\rightarrow \infty}\left[\sum_{l=0}^{k-1} W^{k-l} \theta(l)+\theta(k)\right].
\end{align*}
Then, we have
\begin{align*}
 \Pr\{\lim_{k\rightarrow \infty}\sum_{l=0}^{k} W^{k-l} \theta(l)=0 ~ \textrm{or}~ \lambda_0\} =1.
\end{align*}
Thus, the conditions $c1$ and $c2$ no longer hold for the added noise $\sum_{l=0}^{k} W^{k-l} \theta(l)$ when $k\rightarrow \infty$. It  contradicts to the necessary condition in Theorem \ref{theorem3:nsp}, and thus $\epsilon$-differential privacy cannot be guaranteed. It means that the necessary condition of differential privacy and the necessary condition of average consensus are conflicted, which leads to the impossibility result. Hence, using (\ref{pca}), nodes cannot simultaneously converge to the average of their initial states and preserve $\epsilon$-differential privacy of their initial states.


Finally, we provide the sufficient conditions of differential privacy for  algorithm (\ref{pca}).
\begin{theorem}\label{theorem:scac}
 Suppose that the noise sequence $\theta(1), \theta(2),..., \theta(k), ....$, is independent from both $\theta(0)$ and $x(0)$. Then, if  $\theta(0)$ satisfies conditions $c1$ and $c2$,  algorithm (\ref{pca}) provides $\epsilon$-differential privacy;  if  $\theta(0)$ satisfies (\ref{factpingyi0}) or (both (\ref{con3}) and  (\ref{con4}) simultaneously), algorithm (\ref{pca}) provides $(\epsilon, \delta)$-differential privacy, where $ \epsilon\leq \log(c_b)$ and $\delta$ satisfies (\ref{worsdelt}) or (is given by  (\ref{con3})).
\begin{proof}
Given any $\mathcal {O}\subseteq \mathbf{R}^{n\times \infty}$, we let $\mathcal {O}_l^{\iota}$ be the set of the $l$-th to $\iota$-th column vectors of $\mathcal {O}$ for $l, \iota\in \mathbf{N}^+$. Then,
 \begin{align*}
&\Pr\{\mathcal{I}_{x}^{out}(\infty)  \in \mathcal{O}\}\\=&\Pr\{x^+(0) \in \mathcal {O}_1^{1}\} \Pr\{ \mathcal{I}_{x}^{out}(1, \infty)\in  \mathcal {O}_2^{\infty}|x^+(0)\in\mathcal {O}_1^{1}\}\\=& \oint_{ \mathcal {O}_1^{1}} f_{\theta(0)}(z-x)\Pr\{ \mathcal{I}_{x}^{out}(1, \infty)\in  \mathcal {O}_2^{\infty}|z\}  \text{d} z
\end{align*}
and
 \begin{align*}
&\Pr\{\mathcal{I}_{y}^{out}(\infty)  \in \mathcal{O}\}
\\=&\Pr\{y^+(0) \in \mathcal {O}_1^{1}\} \Pr\{ \mathcal{I}_{y}^{out}(1, \infty)\in  \mathcal {O}_2^{\infty}|y^+(0)\in\mathcal {O}_1^{1}\}
\\=& \oint_{ \mathcal {O}_1^{1}} f_{\theta(0)}(z-y)\Pr\{ \mathcal{I}_{y}^{out}(1, \infty)\in  \mathcal {O}_2^{\infty}|z\}  \text{d} z
\end{align*}
Since  $\theta(1), \theta(2),..., \theta(k), ....$, are independent from both $\theta(0)$ and $x(0)$, for given any same vector $z$, we have
 \begin{align*}
 & \Pr\{ \mathcal{I}_{x}^{out}(1, \infty)\in  \mathcal {O}_2^{\infty}|z\} =\Pr\{ \mathcal{I}_{y}^{out}(1, \infty)\in  \mathcal {O}_2^{\infty}|z\}.
\end{align*}

When $\theta(0)$ satisfies conditions $c1$ and $c2$, we have
 \begin{align*}
& \oint_{ \mathcal {O}_1^{1}} f_{\theta(0)}(z-x)\Pr\{ \mathcal{I}_{x}^{out}(1, \infty)\in  \mathcal {O}_2^{\infty}|z\}  \text{d} z\\
=& \oint_{ \mathcal {O}_1^{1}} f_{\theta(0)}(z-y+\mathbf{\sigma})\Pr\{ \mathcal{I}_{x}^{out}(1, \infty)\in  \mathcal {O}_2^{\infty}|z\}  \text{d} z
\\ = & \oint_{ \mathcal {O}_1^{1}} f_{\theta_{i_0}(0)}(z_{i_0}-y_{i_0}+\sigma_{i_0})\prod_{i=1, i\neq i_0}^n f_{\theta_{i}(0)}(z_i-y_i) \\&\times\Pr\{ \mathcal{I}_{x}^{out}(1, \infty)\in  \mathcal {O}_2^{\infty}|z\}  \text{d} z\\ \leq & \oint_{ \mathcal {O}_1^{1}} c_b f_{\theta(0)}(z-y)\Pr\{ \mathcal{I}_{y}^{out}(1, \infty)\in  \mathcal {O}_2^{\infty}|z\}  \text{d} z,
\end{align*}
where $\mathbf{\sigma}\in \mathbf{R}^n$ is a vector with $\mathbf{\sigma}_{i_o}=\sigma$ and all the other elements equal to $0$, which means that
\[\Pr\{\mathcal{I}_{x}^{out}(\infty)  \in \mathcal{O}\}\leq e^{\epsilon}\Pr\{\mathcal{I}_{y}^{out}(\infty) \in \mathcal{O}\}.\]
Thus, (\ref{pca}) provides $\epsilon$-differential privacy.

When $\theta(0)$ satisfies (\ref{factpingyi0}), we have
 \begin{align*}
& \oint_{ \mathcal {O}_1^{1}} f_{\theta(0)}(z-x)\Pr\{ \mathcal{I}_{x}^{out}(1, \infty)\in  \mathcal {O}_2^{\infty}|z\}  \text{d} z
 \\ \leq & \oint_{ \mathcal {O}_1^{1}} c_b f_{\theta(0)}(z-y)\Pr\{ \mathcal{I}_{y}^{out}(1, \infty)\in  \mathcal {O}_2^{\infty}|z\}  \text{d} z\\&+
 \oint_{\hat{\mathcal {O}}_1^{1}}  f_{\theta(0)}(z-x)\Pr\{ \mathcal{I}_{y}^{out}(1, \infty)\in  \mathcal {O}_2^{\infty}|z\}  \text{d} z
 \\ \leq & \oint_{ \mathcal {O}_1^{1}} c_b f_{\theta(0)}(z-y)\Pr\{ \mathcal{I}_{y}^{out}(1, \infty)\in  \mathcal {O}_2^{\infty}|z\}  \text{d} z\\&+
 \oint_{\hat{\mathcal {O}}_1^{1}}  f_{\theta(0)}(z-x) \text{d} z
\end{align*}
where $\hat{\mathcal{O}}_1^{1}=\{z|z\in\mathcal {O}_1^{1},  f_{\theta_{i_0}(0)}(z_{i_0}-y_{i_0})=0, f_{\theta_{i_0}(0)}(z_{i_0}-x_{i_0})\neq 0 \}$. Let $\delta$ satisfy  (\ref{worsdelt}). Then, we have
\[\Pr\{\mathcal{I}_{x}^{out}(\infty)  \in \mathcal{O}\}\leq e^{\epsilon}\Pr\{\mathcal{I}_{y}^{out}(\infty) \in \mathcal{O}\}+\delta.\]
Thus, (\ref{pca}) provides $(\epsilon, \delta)$-differential privacy.

If $\theta(0)$ satisfies both (\ref{con3}) and  (\ref{con4}) simultaneously, then there also exists $\Theta_0$ and $\Theta_1$ such that $\theta(0)+x$ satisfies (\ref{con3}) and  (\ref{con4}). Hence, we have
 \begin{align*}
 &\Pr\{\mathcal{I}_{x}^{out}(\infty)  \in \mathcal{O}\}\\
= &\oint_{ \mathcal {O}_1^{1}} f_{\theta(0)}(z-x)\Pr\{ \mathcal{I}_{x}^{out}(1, \infty)\in  \mathcal {O}_2^{\infty}|z\}  \text{d} z
 \\ =& \oint_{ \mathcal {O}_1^{1}\cap \Theta_0} f_{\theta(0)}(z-x)\Pr\{ \mathcal{I}_{x}^{out}(1, \infty)\in  \mathcal {O}_2^{\infty}|z\}  \text{d} z \\ &+ \oint_{ \mathcal {O}_1^{1}\cap \Theta_1} f_{\theta(0)}(z-x)\Pr\{ \mathcal{I}_{x}^{out}(1, \infty)\in  \mathcal {O}_2^{\infty}|z\}  \text{d} z
 \\ \leq &  \oint_{\Theta_0} f_{\theta(0)}(z-x) \text{d} z\\&+ c_b \oint_{ \mathcal {O}_1^{1}\cap \Theta_1} f_{\theta(0)}(z-y)\Pr\{ \mathcal{I}_{y}^{out}(1, \infty)\in  \mathcal {O}_2^{\infty}|z\}  \text{d} z
  \\ \leq &  \delta + \log(c_b)\Pr\{\mathcal{I}_{y}^{out}(\infty)  \in \mathcal{O}\}.
\end{align*}
It means that (\ref{pca}) provides $(\epsilon, \delta)$-differential privacy.

Thus, we have completes the proof.
\end{proof}
\end{theorem}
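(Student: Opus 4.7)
The plan is to exploit the independence of the post-initial noises $\theta(1),\theta(2),\ldots$ from $\theta(0)$ and $x(0)$ in order to reduce the infinite-dimensional differential privacy question to a single-step one that can be handled by Theorems \ref{theorem0:nsc}, \ref{theorem1:scp}, and \ref{theorem1:scpad}. The key structural observation is that under (\ref{pca}) the dynamics satisfy $x(k+1)=W\,x^+(k)$, so for $k\ge 1$ the trajectory $x^+(1),x^+(2),\ldots$ depends on the initial state only through $x^+(0)=x(0)+\theta(0)$; given $x^+(0)=z$, the conditional distribution of the tail $\mathcal{I}_x^{out}(1,\infty)$ is a function of $z$ and of the independent noises $\theta(1),\theta(2),\ldots$ alone, and hence is the same whether the initial state was $x$ or $y$.

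With this in mind, first I would write the output likelihood as a one-step conditioning
\[
\Pr\{\mathcal{I}_x^{out}(\infty)\in\mathcal{O}\}=\oint_{\mathcal{O}_1^{1}} f_{\theta(0)}(z-x)\,\Pr\{\mathcal{I}_x^{out}(1,\infty)\in\mathcal{O}_2^{\infty}\mid x^+(0)=z\}\,\text{d}z,
\]
and the analogous identity for $y$. By the independence assumption the conditional factor is identical on the two sides, so the whole comparison between $\Pr\{\mathcal{I}_x^{out}(\infty)\in\mathcal{O}\}$ and $\Pr\{\mathcal{I}_y^{out}(\infty)\in\mathcal{O}\}$ collapses to a comparison of $f_{\theta(0)}(z-x)$ with $f_{\theta(0)}(z-y)$ under the integral.

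Next I would branch on the hypothesis on $\theta(0)$. If $\theta(0)$ satisfies $c_1$ and $c_2$, then for $\sigma$-adjacent $x,y$ differing only in coordinate $i_0$ by $\sigma$, condition $c_2$ gives $f_{\theta_{i_0}(0)}(z_{i_0}-x_{i_0})\le c_b f_{\theta_{i_0}(0)}(z_{i_0}-y_{i_0})$ on the complement of a null set (by $c_1$), so the integrand is bounded by $c_b$ times the $y$-integrand pointwise and $\epsilon$-DP with $\epsilon\le\log(c_b)$ follows, essentially the integrated form of the sufficiency direction of Theorem \ref{theorem0:nsc}. If $\theta(0)$ satisfies (\ref{factpingyi0}) only, the same argument goes through except on the set where $f_{\theta_{i_0}(0)}(z_{i_0}-y_{i_0})=0$ while $f_{\theta_{i_0}(0)}(z_{i_0}-x_{i_0})\ne 0$; on that set the conditional factor is bounded by $1$ and the integral of $f_{\theta(0)}(z-x)$ is controlled by the zero-point-set integral bounded in (\ref{worsdelt}), yielding the additive $\delta$ exactly as in Theorem \ref{theorem1:scp}. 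If instead (\ref{con3}) and (\ref{con4}) hold, I would split $\mathcal{O}_1^{1}$ according to the partition $\Theta=\Theta_0\cup\Theta_1$: on $\Theta_1$ the ratio condition gives the $c_b$ factor, and on $\Theta_0$ the probability is bounded by $\delta$ from (\ref{con3}), following the mechanism of Theorem \ref{theorem1:scpad}.

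The main obstacle I anticipate is making the conditioning step rigorous—specifically justifying that the conditional probability $\Pr\{\mathcal{I}_x^{out}(1,\infty)\in\mathcal{O}_2^{\infty}\mid x^+(0)=z\}$ is well-defined and genuinely independent of whether the trajectory came from initial state $x$ or $y$. This is where the independence of $\theta(1),\theta(2),\ldots$ from both $\theta(0)$ and $x(0)$ is essential: without it, the posterior of the future noises given $x^+(0)=z$ could depend on $x(0)$, and the cancellation would fail. Once this factorization is in place, the rest is a direct import of the single-step results already established, and the bounds on $\epsilon$ and $\delta$ follow with the same constants as in those theorems.
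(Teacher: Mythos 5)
Your proposal is correct and follows essentially the same route as the paper's proof: conditioning on the value of $x^+(0)=z$, using the independence of $\theta(1),\theta(2),\ldots$ from $\theta(0)$ and $x(0)$ to equate the conditional tail probabilities for $x$ and $y$, and then reducing each of the three hypothesis branches to the corresponding single-step result (Theorems \ref{theorem0:nsc}, \ref{theorem1:scp}, \ref{theorem1:scpad}), including the same treatment of the exceptional set where $f_{\theta_{i_0}(0)}(z_{i_0}-y_{i_0})=0$ but $f_{\theta_{i_0}(0)}(z_{i_0}-x_{i_0})\neq 0$. No substantive differences from the paper's argument.
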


Based on the above theoretical results, it is not difficult to analyze differential privacy of the existing privacy-preserving consensus algorithm. For example, in \cite{huang12, Nozari16}, the privacy-preserving consensus algorithms are designed by adding  independent and Laplacian noise  to the consensus process, and thus the sufficient conditions in Theorem \ref{theorem:scac} are satisfied. Hence, these privacy-preserving consensus algorithms proposed in \cite{huang12, Nozari16} are $\epsilon$-differentially private, while the exact average consensus cannot be guaranteed by these algorithms. In \cite{yilin15tac},  the exponentially decaying and zero-sum normal noises are adopted in the privacy-preserving consensus algorithm. Since the sum of all added noises equals $0$, the necessary condition in Theorem \ref{theorem3:nsp} cannot be satisfied. Hence, the algorithm proposed in \cite{yilin15tac} is not $\epsilon$-differentially private. But it achieves the exact average consensus in the mean-square sense.

\section{Related Works}\label{sec:rw}
The concept of differential privacy (including $\epsilon$-differential privacy and $(\epsilon, \delta)$-differential privacy) was first introduced by Dwork et al. \cite{Dwork06, Dwork08}.  After then, differential privacy has attracted substantial attention throughout computer, control and
communication science, including areas like deep learning \cite{Abadi16}, optimization \cite{huang15, Han2014}, dynamic systems \cite{ny14tac} and more. There are also some other privacy definitions, e.g., identifiability and mutual-information privacy, and we refer the readers to  \cite{weinatit16}  to view the relationship among these privacy concepts.

Dwork et al.\cite{Dwork06} showed that the Laplacian mechanism, i.e., adding random noise with Laplace distribution proportional
to the global sensitivity of the query function to perturb the query output, can preserve
 $\epsilon$-differential privacy. Also, McSherry and Talwar  \cite{McSherry07} showed the exponential
mechanism can preserve $\epsilon$-differential privacy for general query functions. It is shown that adding random noise with Gaussian distribution can preserve the $(\epsilon, \delta)$-differential privacy for both real valued
query functions \cite{Dwork08} and infinite dimensional query functions \cite{Hall13}.
For the work on enforcing differential privacy in optimization, linear programs are solved in a framework that allows for keeping objective
functions or constraints private \cite{Hsu2014}. This work was extended by the authors of \cite{Han2014}, and they considered
a similar setting wherein some affine objectives with linearly constrained problems are solved while keeping the privacy of the
objective functions. To keep inputs private from an adversary observing a system's outputs, differential privacy has been adapted to dynamical
systems, which introduces the privacy concerns in the context of systems theory \cite{ny14tac}. Wasserman and Zhou in \cite{Wasserman10}, proposed a statistical framework for differential privacy, where the differential privacy is investigated from a statistical perspective.

Recently, privacy issues are concerned in multi-agent systems, and mainly investigate the privacy-preserving consensus problem.  The objective is aiming to guarantee the agents initial states (or objective functions) private and average consensus to be achieved \cite{ny14tac, Manitara13, huang12,huang15, Nozari16,yilin15tac}. In \cite{huang15}, the authors solved distributed consensus problems while keeping the
agents' objective functions private, and in \cite{huang12} the same authors solved similar
problems while keeping the privacy of each agent's initial state. In these works, differential privacy is guaranteed by adding the independently and Laplacian noises to the consensus process. More recently, Nozari et al. \cite{Nozari16} obtained and
proved an interesting impossibility result that achieving average consensus and differential privacy simultaneously is impossible by contradiction via the definition of differential privacy. This result is also proved in this paper by comparing the necessary conditions of differential privacy and of the average consensus.

Different from the existing work, in this paper, we obtain a necessary and sufficient condition of $\epsilon$-differential privacy, and the sufficient conditions of $(\epsilon, \delta)$-differential privacy, which can be used to  the random noise adding mechanism with any noise distributions.

\section{Conclusions}\label{sec:conclusions}

In this paper, we provided different conditions  of differential privacy for a generally random noise mechanism. We obtained the conditions  for determining differential privacy of random noise mechanism, followed by an application study on privacy-preserving consensus algorithm. Specifically, considering a generally random noise adding mechanism, we obtained a necessary and sufficient condition of $\epsilon$-differential privacy, and two useful sufficient conditions of  $(\epsilon, \delta)$-differential privacy of the noise adding mechanism. We also provided the estimations of the upper bounds of the parameters $\epsilon$ and $\delta$. Then, we showed that the obtained theory provides efficient and simple criteria of differential privacy using case studies. In addition, we applied the obtained result to obtain the necessary condition and the sufficient condition for the privacy-preserving consensus algorithm, under which differential privacy is achieved.

There are still many open issues worth further investigation. For example, in this paper, we focus on the privacy analysis, and do not consider the accuracy of queries from statistical databases under the random noise adding mechanism. How the distribution of the adding noise affect the accuracy of queries needs further investigation. Meanwhile,  the relationship between the parameters in differential privacy ($\epsilon$ and $\delta$), the parameters of the PDF of the adding noise (mean and variance) also needs further investigation. In addition, Cuff and Yu \cite{ccs16} recently introduced a novel privacy concept, named mutual-information differential privacy (MIDP), which is sandwiched between  $\epsilon$-differential privacy and   $(\epsilon, \delta)$-differential privacy in terms of its strength. MIDP depicts differential privacy from the perspective of information theory. Thus, how to extend our results under MIDP is also an interesting issue, requiring further research.

\begin{IEEEbiography}
{Jianping He} (M'15) is currently a research fellow in the Department of Electrical and Computer Engineering at University of Victoria.  He received the Ph.D. degree of  Control Science and Engineering in 2013, at Zhejiang University, Hangzhou, China.  His research interests include the control and optimization of sensor networks and cyber-physical systems, the scheduling and optimization in VANETs and social networks,  and the investment decision in financial market and electricity market. He served as an associate editor for a special issue on Consensus-based Applications in Networked Systems in the International Journal of Robust and Nonlinear Control in 2016.
\end{IEEEbiography}

\begin{IEEEbiography}
{Lin Cai} (S'00-M'06-SM'10)  received the M.A.Sc. and PhD degrees in electrical
and computer engineering from the University of Waterloo, Waterloo,
Canada, in 2002 and 2005, respectively.  Since 2005, she has been with the
Department of Electrical and Computer Engineering, University of Victoria,
where she is currently a Professor. Her research interests span several areas in
communications, networking and control, with a focus on network protocol and
control algorithm design supporting Internet-of-the-Things.

Dr. Cai has served as a TPC symposium co-chair of the IEEE Globecom'10
and Globecom'13. She served as an Associate Editor of the IEEE Transactions on Wireless
Communications, the IEEE Transactions on Vehicular Technology, the EURASIP Journal on
Wireless Communications and Networking, the International Journal of
Sensor Networks, and the Journal of Communications and Networks, and
as the Distinguished Lecturer of the IEEE VTS Society.
She was a recipient of the NSERC Discovery Accelerator Supplement
Grants in 2010 and 2015, respectively, and the Best Paper Awards of the IEEE ICC 2008 and the IEEE
WCNC 2011
\end{IEEEbiography}

\end{document}